\documentclass[manuscript,screen]{acmart}

\usepackage[mode=buildnew]{standalone} 
\usepackage{tikz}
\usepackage{hyperref}
\usepackage{amsmath}
\usepackage{papermacros}
\usepackage[ruled, vlined, linesnumbered]{algorithm2e}
\SetArgSty{textnormal}
\SetKwComment{Comment}{$\triangleright$\ }{}
\SetCommentSty{normalfont}
\SetKw{KwDo}{do}
\SetKwFor{While}{while}{}{endwhile}
\SetKwFor{If}{if}{}{endif}
\AtBeginDocument{%
  }

\setcopyright{acmlicensed}
\copyrightyear{2018}
\acmYear{2026}
\acmJournal{TALG}
\begin{document}

%%
%% The "title" command has an optional parameter,
%% allowing the author to define a "short title" to be used in page headers.
\title{Adaptive encodings for small and fast compressed suffix arrays}

%%
%% The "author" command and its associated commands are used to define
%% the authors and their affiliations.
%% Of note is the shared affiliation of the first two authors, and the
%% "authornote" and "authornotemark" commands
%% used to denote shared contribution to the research.
\author{Diego D\'iaz-Dom\'inguez}
%\authornote{Both authors contributed equally to this research.}
\email{diego.diaz@helsinki.fi}
\orcid{1234-5678-9012}
\author{Veli M\"akinen}
%\authornotemark[1]
\email{veli.makinen@helsinki.fi}
\orcid{0000-0003-4454-1493}
\affiliation{%
  \institution{Department of Computer Science, University of Helsinki}
  \city{Helsinki}
  \state{Uusimaa}
  \country{Finland}
}

%% By default, the full list of authors will be used in the page
%% headers. Often, this list is too long, and will overlap
%% other information printed in the page headers. This command allows
%% the author to define a more concise list
%% of authors' names for this purpose.
\renewcommand{\shortauthors}{D\'iaz-Dom\'inguez and M\"akinen}

%%
%% The abstract is a short summary of the work to be presented in the
%% article.
\begin{abstract}
Compressed suffix arrays index large repetitive collections and are widely used in bioinformatics, version control, and natural language processing. The $r$-index and its variants combine the run-length Burrows--Wheeler transform (BWT) with a sample of the suffix array to achieve space proportional to the number of equal-symbol runs. While effective for near-identical strings, their size grows quickly when edited copies are added to the collection as the number of BWT runs is sensitive to variation. Existing approaches either apply extra compression at the cost of slower queries or prioritize speed at the expense of space, limiting practical tradeoffs at large scale.

We introduce \emph{variable-length blocking} (VLB), an adaptive encoding for BWT-based compressed suffix arrays. Our technique adapts the amount of indexing information to the local run structure: compressible regions retain less auxiliary data, while incompressible areas store more. Specifically, we recursively partition the BWT until each block contains at most $w$ runs (a parameter), and organize the blocks into a tree. Accessing a position requires descending a root-to-leaf path followed by a short run scan. Compressible regions are placed near the root, while incompressible regions lie deeper and are augmented with additional indexing data to speed up access. This strategy balances space and query speed by reallocating bits saved in compressible areas to accelerate access to incompressible ones. The VLB-tree also improves memory locality by colocating related BWT and suffix array information.

Backward search relies on rank and successor queries over the BWT. We introduce a sampling technique that relaxes global correctness by guaranteeing that these queries are correct only along valid backward-search states. This restriction reduces space usage without affecting the speed or correctness of pattern matching.

We further extend the concepts of the VLB-tree to encode suffix array samples and then to the subsampled $r$-index ($sr$-index), thus producing a small fully-functional compressed suffix array. Experiments show that VLB-tree variants outperform the $r$-index and $sr$-index in query time, while retaining space close to that of the $sr$-index. The move data structure, a recent encoding, achieves better speed performance than the VLB-tree but uses considerably more space, yielding a clear space--time tradeoff.

\end{abstract}

%%
%% The code below is generated by the tool at http://dl.acm.org/ccs.cfm.
%% Please copy and paste the code instead of the example below.
%%
%%\begin{CCSXML}
%%<ccs2012>
%% <concept>
%%  <concept_id>00000000.0000000.0000000</concept_id>
%%  <concept_desc>Do Not Use This Code, Generate the Correct Terms for Your Paper</concept_desc>
%%  <concept_significance>500</concept_significance>
%% </concept>
%% <concept>
%%  <concept_id>00000000.00000000.00000000</concept_id>
%%  <concept_desc>Do Not Use This Code, Generate the Correct Terms for Your Paper</concept_desc>
%%  <concept_significance>300</concept_significance>
%% </concept>
%% <concept>
%%  <concept_id>00000000.00000000.00000000</concept_id>
%%  <concept_desc>Do Not Use This Code, Generate the Correct Terms for Your Paper</concept_desc>
%%  <concept_significance>100</concept_significance>
%% </concept>
%% <concept>
%%  <concept_id>00000000.00000000.00000000</concept_id>
%%  <concept_desc>Do Not Use This Code, Generate the Correct Terms for Your Paper</concept_desc>
%%  <concept_significance>100</concept_significance>
%% </concept>
%%</ccs2012>
%%\end{CCSXML}

\begin{CCSXML}
<ccs2012>
   <concept>
       <concept_id>10003752.10003809.10010031.10002975</concept_id>
       <concept_desc>Theory of computation~Data compression</concept_desc>
       <concept_significance>500</concept_significance>
       </concept>
   <concept>
       <concept_id>10010405.10010444.10010093.10010934</concept_id>
       <concept_desc>Applied computing~Computational genomics</concept_desc>
       <concept_significance>500</concept_significance>
       </concept>
   <concept>
       <concept_id>10003752.10003809.10010031.10010032</concept_id>
       <concept_desc>Theory of computation~Pattern matching</concept_desc>
       <concept_significance>500</concept_significance>
       </concept>
 </ccs2012>
\end{CCSXML}

\ccsdesc[500]{Theory of computation~Data compression}
\ccsdesc[500]{Applied computing~Computational genomics}
\ccsdesc[500]{Theory of computation~Pattern matching}

%\ccsdesc[500]{Do Not Use This Code~Generate the Correct Terms for Your Paper}
%\ccsdesc[300]{Do Not Use This Code~Generate the Correct Terms for Your Paper}
%\ccsdesc{Do Not Use This Code~Generate the Correct Terms for Your Paper}
%\ccsdesc[100]{Do Not Use This Code~Generate the Correct Terms for Your Paper}

%%
%% Keywords. The author(s) should pick words that accurately describe
%% the work being presented. Separate the keywords with commas.
\keywords{Burrows--Wheeler Transform, compressed suffix arrays, pattern matching, adaptive encoding, large and repetitive collections}

%\received{20 February 2007}
%\received[revised]{12 March 2009}
%\received[accepted]{5 June 2009}

%%
%% This command processes the author and affiliation and title
%% information and builds the first part of the formatted document.
\maketitle

\section{Introduction}

The search for patterns in large collections of strings is a fundamental task in information retrieval, data mining, and bioinformatics. This problem has been studied for decades, producing iconic solutions such as the suffix array~\cite{ma93su} or the suffix tree~\cite{we73li}, but our ability to accumulate textual data has far exceeded the limits of those solutions, motivating the development of compressed indexes. Large collections are often repetitive, and these repetitions induce regularities in suffix arrays and trees that the compressed variants exploit to reduce space. However, the continuous accumulation of data and the development of new technologies have led to the creation of even larger collections that can reach the terabyte scale, and even compressed solutions fall short in this context.  

Compressed suffix arrays are space-efficient indexes that support the same two basic queries as plain suffix arrays: \emph{count} and \emph{locate} the occurrences of a pattern in a text. Several variants have been proposed~\cite{csagro,csasada,g2018op}, with the FM index~\cite{ferragina2005indexing} being the most widely used. The FM index represents a string $S[1..n]$ via the Burrows--Wheeler transform (BWT)~\cite{BW94} plus a sample of the suffix array $\sa[1..n]$ of $S$, and fits in $nH_k(S) + o(n\log \sigma)$ bits, where $H_k$ is the $k$-th order entropy of $S$ and $\sigma$ is the alphabet size. Given a pattern $P[1..m]$, it reports the $occ$ matches in $O(m \log \sigma + occ\log^{1+\epsilon} n)$ time for any constant $\epsilon>0$. In practice, its space is close to $n \log \sigma$ bits, far below the $n (\log \sigma + \log n)$ bits needed to support $count$ and $locate$ with $S$ and the plain suffix array. This efficiency has made the FM index a workhorse in genomics~\cite{li2010fast, lan10al}, yet even space close to the plain-text representation can be prohibitive at massive scales.

The $r$-index of Gagie et al.~\cite{g2018op} is a variant of the FM index that reduces the space to $O(r)$ by run-length encoding the BWT and sampling $2r$ elements of the SA, where $r$ is the number of equal-symbol runs in the BWT. In collections of near-identical sequences, $r$ is much smaller than $n$, achieving important space reductions that have motivated new $r$-index-based data structures~\cite{bou21ph,rossi2022moni,diaz23simp,ahmed2023spumoni,zak24movi, song24cent, tagarrays2025} for genomic analysis. However, the BWT is sensitive to edits, as a single text modification can increase $r$ by a $\Theta( \log n)$ factor~\cite{giu25bit}. The sensitivity rapidly increases $r$ with new variation until the $r$-index approaches the size of the FM index~\cite{srindex2025}. This poses a problem for terabyte-scale collections, where the level of repetitiveness is often not extreme. 

An important observation is that runs are distributed unevenly, with some BWT areas having few long runs, while others have many short runs; the suffix array samples of the $r$-index show a similar locality. This property makes uniform encoding inefficient and motivates adaptive schemes. Existing approaches only partially exploit this idea: the fixed block boost of Gog et al.~\cite{gog2019fixed} breaks the BWT into equal-length blocks and encodes them independently, while the \emph{subsampled} $r$-index (or $sr$-index) of Cobas et al.~\cite{srindex2025} discard values of $\sa$ from text areas where the sampling is dense. These techniques help but remain coarse or limited in scope.

Another challenge is query speed, as searching for long patterns in BWT-based indexes triggers multiple cache misses. The move data structure of Nishimoto and Tabei~\cite{move21} alleviates this problem by reorganizing the run-length BWT data in a cache-friendly manner, achieving significantly faster queries than the $r$-index but at the cost of being 2-2.5 times larger~\cite{bertram2024move,dinklage2025rlz}. Thus, existing work splits into two directions: minimizing space ($sr$-index) or maximizing speed (\emph{move}), with no structure achieving both.

\paragraph{Our contribution.} We present \emph{variable-length blocking} (VLB), a technique for representing BWT-based compressed suffix arrays that balances query speed and space usage. We exploit the skewed distribution of BWT runs to redistribute indexing costs. We recursively divide the BWT into variable-length blocks so that each block contains a bounded number of runs, and organize the output in a hierarchical structure (the VLB-tree). Compressible blocks remain near the root and require little auxiliary data, while less compressible regions induce deeper subtrees enriched with additional indexing information. This non-uniform layout reallocates space from areas where access is fast to areas where queries are more expensive, yielding a structure whose depth and space adapts to the compressibility of the BWT.

\begin{itemize}
\item The VLB-tree (Section~\ref{sec:vlb-tree}): we formalize the VLB-tree and analyze its performance. Given a block size $\ell$, branching factor $f$, and run threshold $w$, a query traverses a tree path of height $O(\log_f (\ell/w))$ and scans at most $w$ consecutive runs in a leaf. Counting the occurrences of a pattern $P[1..m]$ takes $O(m(\log_{f} (\ell/w) + w))$ time. Importantly, processing each query symbol incurs approximately one cache miss in sparse BWT regions and up to $\log_{f} (\ell/w)$ cache misses in dense, high-run regions, assuming $w$ consecutive runs fit the cache. 

\item Counting with toehold (Section~\ref{sec:cnt_toehold}): we integrate suffix array samples within the VLB-tree so that an occurrence of the pattern can be extracted cache-efficiently while counting. Similarly to the $r$-index, this occurrence serves as a toehold to decode the other text positions.

\item VLB-tree sampling (Section~\ref{sec:samp}): we introduce a sampling mechanism for rank and successor queries; core operations for pattern matching. The sampled data relax global correctness, only guaranteeing correct answers when the output induces a BWT range that is reachable during pattern matching. By exploiting the invariants of backward search, we reduce index space without affecting the pattern-matching speed or correctness. This relaxation principle is independent of VLB and may be applicable to other indexing schemes.

\item VLB-tree for $\phi^{-1}$ (Section~\ref{sec:phi}): we extend the VLB-tree to represent the $\phi^{-1}$ function, which decodes the occurrences of a pattern from its toehold. The combined VLB-trees of the BWT and $\phi^{-1}$ yield a compressed suffix array equivalent to the $r$-index. Section~\ref{sec:sr-vlbtree} describes how to subsample this representation to produce a cache-friendly $sr$-index, including a fast variant that speeds up the location of pattern occurrences.
\end{itemize}

We implemented two VLB-based data structures: a run-length BWT supporting $count$ queries and a fast $sr$-index supporting both $count$ and $locate$. Our experiments show speedups of up to $5.5$ for $count$ queries and $9.77$ for $locate$ queries over state-of-the-art alternatives. In terms of space, our run-length BWT achieves up to a 2-fold reduction, while our $sr$-index requires space comparable to that of Cobas et al.~\cite{srindex2025}. Although the move data structure~\cite{move21,bertram2024move} attains faster query times, it does so at the cost of being typically $2.55$--$8.33$ times larger, yielding a less favorable space–time tradeoff. Finally, cache-miss measurements indicate that the performance gains of our framework are largely due to its cache-aware hierarchical layout, underscoring its practical impact.

\section{Preliminaries}\label{sec:pre}

\paragraph{Strings.} A \emph{string} $S[1..n]$ is a sequence of $n$ symbols over an alphabet $\Sigma=\{1,2,\ldots, \sigma\}$, where the smallest symbol is a sentinel $1=\$$ that appears only at the end ($S[n]=\$$). A \emph{run} is a maximal equal-symbol substring $S[h..t]=c^{t-h+1}$, with \emph{head} $S[h]$ and \emph{tail} $S[t]$. The run-length encoding of $S[1..n] = c_1^{l_1}c_2^{l_2}{\cdots}c_{r}^{l_r}$ is $rle(S)= (c_1, l_1) (c_2, l_2) \ldots (c_r, l_r)$.

\paragraph{Operations over sequences.} \label{sec:op_seq} For $c\in\Sigma$, $rank_{c}(S, i)$ is the number of occurrences of $c$ in $S[1..i]$, and $select_{c}(S, j)$ is the position of the $j$-th occurrence of $c$ in $S$. The queries $succ_c(S, j)$ and $pred_c(S, j)$ return the nearest occurrence of $c$ at or after $j$, and at or before $j$, respectively (returning $n+1$ and $0$ if no such occurrence exists).

\subsection{The suffix array}

The \emph{suffix array} $\sa[1..n]$~\cite{ma93su} is a permutation of $[1..n]$ that lists the suffixes of $S$ in increasing lexicographic order. Its inverse $\isa[1..n]$ stores positions such that $\isa[j]=i$ iff $\sa[i]=j$. The $\sa$ supports two fundamental queries. The operation $count(P[1..m])$ returns the number of occurrences of $P \in \Sigma^{*}$ in $S$, and $locate(P[1..m])$ returns the set $\mathcal{O} \subset [n]$ of indices such that for each $i \in \mathcal{O}$, $S[i..i+m-1]=P[1..m]$.

A concept related to the suffix array is the $\phi^{-1}$ function:

\begin{definition}\label{def:phi}
\textnormal{(Definition 2 \cite{g2018op})} Let $i$ be a text position and let $\isa[i]=j$ be the corresponding index for $\sa[j]=i$. The function $\phi^{-1}$ is a permutation of $[1..n]$ defined as

\begin{equation*}
\phi^{-1}(i)= 
\begin{cases}
    \sa[j+1] = \sa[\isa[i]+1] & \text{ if } j < n \\
    \sa[1]=n & \text{ if } j = n.
 \end{cases}
\end{equation*}
\end{definition}

\subsection{The Burrows--Wheeler transform}\label{sec:bwt}

The Burrows--Wheeler transform (BWT)~\cite{BW94} of $S$ is a reversible permutation $BWT(S)=L[1..n]$ that enables compression and fast pattern search. It is defined as $L[j] = S[\sa[j]-1]$ when $\sa[j] \neq 1$, and $L[j]=\$$ when $\sa[j]=1$.

Let $F[1..n]$ be the first column of the suffixes of $S$ sorted in lexicographic order, and let $C[1..\sigma]$ be an array storing in $C[c]$ the number of symbols in $F$ that are smaller than $c$. The $LF$ mapping $LF(j) = C[L[j]] + rank_{L[j]}(L,j)$ links $L[j]$ to $F[LF(j)]=F[\isa[\sa[j]-1]]$. Moreover, visiting the $k+1$ positions $L[j], L[LF(j)=j_2], L[LF(j_2)=j_3], \ldots, L[LF(j_{k})=j_{k+1}]$ spells $S[\sa[j]-1-k..\sa[j]-1]$ from right to left. Applying $LF$ mapping $k$ times from $L[j]$ is commonly denoted $LF^{k}(j)=j_{k+1}$.

\paragraph{Backward search.} The query $backwardsearch(P)=(sp_{1}, ep_{1})$ returns the range $\sa[sp_{1}..ep_{1}]$ of suffixes in $S$ prefixed by $P[1..m]$, where $ep_{1}-sp_{1}+1 = count(P)$. In each step $l \in \{m+1, m, \ldots, 2\}$, the algorithm obtains the range $\sa[sp_{l-1}..ep_{l-1}]$ of suffixes in $S$ prefixed by $P[l-1..m]$ using the formula 

\begin{equation}\label{eq:bs}
\begin{array}{l}
   sp_{l-1} = C[P[l-1]] + rank_{P[l-1]}(L, sp_l-1)+1\\  
   ep_{l-1} = C[P[l-1]] + rank_{P[l-1]}(L, ep_l).
\end{array}
\end{equation}

The process starts with $sp_{m+1}=1$ and $ep_{m+1}=n$, which represents the empty string $P[m+1..m]$, and iteratively refines the range $\sa[sp_{l-1}..ep_{l-1}]$ until it reaches $\sa[sp_{1}..ep_{1}]$ after $m$ steps. 

%The BWT groups the symbols of $S$ with a similar right context in long runs in $L$, allowing compression proportional to the number of runs $r$. In highly repetitive datasets, $r$ is often much smaller than $n$. 

\subsection{BWT-based compressed suffix arrays}

A compressed suffix array encodes $L$ with support for queries $backwardsearch$ and $LF$ together with a sample of suffix array values. A $locate(P)$ query runs $backwardsearch(P)=(sp_{1}, ep_{1})$ and recovers the non-sampled elements of $\sa[sp_{1}..ep_{1}]$ using the links between $L$, $S$, and $\sa$. Representative designs include the FM-index~\cite{ferragina2005indexing}, the $r$-index~\cite{g2018op}, and the subsampled $r$-index~\cite{srindex2025}.

\begin{figure*}[t]
\centering
%\tikzsetnextfilename{sr_index}
\includegraphics[width=\textwidth]{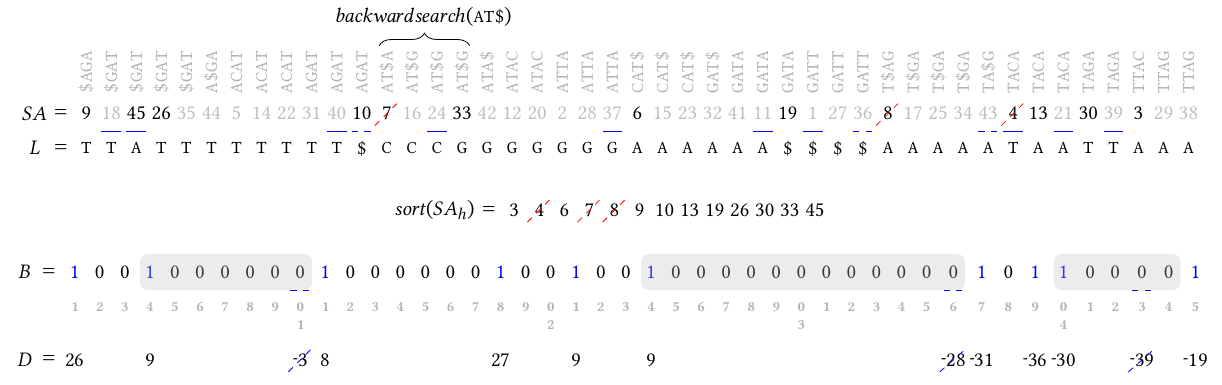}
\caption{Components of the $r$-index and $sr$-index ($s=3$) for $S=\texttt{GATTACAT\$AGATACAT\$GATACAT\$GATTAGAT\$GATTAGATA\$}$. Vertical strings are the partial suffixes of $S$ in lexicographical order. The black values in $\sa$ correspond to the sample $\sa_h$. Values of $\sa$ underlined in blue are the marked positions in $B$. The canceled values in $\sa$ were originally in $\sa_h$ but they were discarded after subsampling. The dashed underlined values in $\sa$ are the positions in $B$ that are cleared because of subsampling. Each gray region in $B$ is a partially valid area, with the underlined $0$ (cleared) marking the start of the invalid suffix. In $D$, canceled values are discarded because their bits in $B$ were cleared.}
\label{fig:sri}
\Description{Example of the $r$-index and $sr$-index}
\end{figure*}

\subsubsection{The $r$-index}\label{sec:r_index}

This data structure samples two elements of $\sa$ per BWT run and stores $L$ in a run-length encoded format that supports $LF$ mapping and backward search. The suffix array samples are encoded as follows:

\begin{itemize}
    \item An array $\sa_h[1..r]$ stores in $\sa_h[u]$ the value $\sa[h_u]$ associated with the $u$-th run $L[h_u..t_u]$.
    \item A bitvector $B[1..n]$ marks each position $B[\sa[t_u]]=1$ associated with the tail $t_u$ of $L[h_u..t_u]$.
    \item An array $D[1..r]$ stores in $D[rank_1(B[\sa[t_u]])]$ the difference $\sa[h_{u+1}]-\sa[t_u]$ between the head and tail of consecutive runs.
\end{itemize}

Figure~\ref{fig:sri} shows an example of these components. 

The $r$-index answers the query $locate(P)$ using a modified\footnote{We note that the $r$-index is often described in the literature right to left: storing run tails and using the inverse of $\phi^{-1}$. However, we defined left to right for compatibility with our method.} version of $backwardsearch(P)=(sp_{1}, ep_{1}, \sa[sp_{1}])$ that returns the first value $\sa[sp_{1}]$, and later uses $\sa[sp_{1}]$ as a \emph{toehold} to locate the elements of $\sa[sp_{1}+1..ep_{1}]$ with $\phi^{-1}$.

\paragraph{Counting with toehold.} As $backwardsearch(P)$ visits the ranges $(sp_{m+1}, ep_{m+1}), (sp_m, ep_m), \ldots, (sp_2, ep_2)$, it maintains the index $sp_g$ for the smallest step $g$ with $L[sp_g]\neq P[g-1]$, initially setting $sp_g=1$ and $g=m+1$. Once the algorithm obtains $(sp_{1}, ep_{1})$, it finds the index $\rho$ of the leftmost run $L[h_{\rho}..t_{\rho}]$ within $L[sp_g..ep_g]$ with $L[h_{\rho}] = P[g-1]$. This information is enough to compute the toehold

\begin{equation}\label{eq:toehold}
\sa[sp_{1}]= \sa_h[\rho]-(m-g+2).
\end{equation}

To retrieve $\rho$, the $r$-index uses a special successor query, say $rsucc$, that returns the index of the BWT run where the successor lies, which in this context translates to $rsucc_{P[g-1]}(L, sp_g)=\rho$.

\paragraph{Locating from toehold.} The $r$-index obtains $\sa[sp_{1}+1..ep_{1}]$ iteratively by performing $\phi^{-1}(\sa[j])=\sa[j+1]$ from $j=sp_{1}$. The difference $d=\sa[j+1]-\sa[j]$ between values $\sa[j]$ and $\sa[j+1]$ that fall within the same run $L[h_u..t_u]$, with $h_u\leq j<t_u$, is invariant under $LF$ operations as long as the traversal remains within the same run. Consequently, the difference between suffix array values changes only when crossing a run boundary.

This property allows the $r$-index to encode $\phi^{-1}$ using a single difference per run. Let $b$ be the largest index such that $B[b]=1$ and $b \leq \sa[j]$. Then,

\begin{equation}\label{eq:phi}
\phi^{-1}(\sa[j]) = \sa[j]+D[rank_1(B, \sa[j])] = \sa[j+1].
\end{equation}

Thus, $\phi^{-1}$ reduces to a predecessor query on $B$ followed by a lookup in $D$.

\subsubsection{The subsampled $r$-index}\label{sec:sr_index}

The $sr$-index reduces space by discarding suffix array samples in dense regions of $\sa_h$. Removed samples invalidate some regions of the index, which are reconstructed at query time via $LF$ operations. This approach significantly reduces space while preserving good query performance.

Let $p_1 < p_2 < \ldots < p_r$ be the sorted values of $\sa_h$, and let $s<n$ be a sampling parameter. For any $1<a<r$, the sample $p_a$ is discarded from $\sa_h$ if $p_{a+1}-p_{a-1} \leq s$. The test is applied sequentially for $a=2,3,\ldots, r-1$. Thus, if $p_a$ is removed, the next step checks if $p_{a+2}-p_{a-1} \leq s$ to decide whether $p_{a+1}$ is sampled, applying the same procedure to the remaining elements of $\sa_h$. 

A bitvector $R[1..r]$ now marks the runs in $L$ that were subsampled. A rank structure maps the retained samples to their positions in the reduced array $\sa_h$. Moreover, for each discarded run $L[h_u..t_u]$, the associated bit $B[\sa[h_{u}-1]]=1$ is cleared and the corresponding difference $D[rank_1(B, \sa[h_{u}-1])]$ removed.

\paragraph{Toehold recovery} During the execution of $backwardsearch(P)=(sp_{1}, ep_{1}, \sa[sp_{1}])$, the run $\rho$ that determines the toehold $\sa[sp_{1}]$ (Equation~\ref{eq:toehold}) may no longer retain a sample. In this case, the $sr$-index recovers $\sa[h_{\rho}]$ by following $LF$ steps until reaching a run head $h_{\rho'}=LF(h_{\rho})^{k}$, with $k\leq s$, whose sample is retained ($R[\rho']=0$). The value is then recovered as

\begin{equation}\label{eq:sr_index_th}
\sa[h_{\rho}]=\sa_h[rank_0(R, \rho')]+k. 
\end{equation}

\paragraph{Valid and invalid areas} Subsampling also affects $\phi^{-1}$. Let $b$ be the largest position such that $B[b]=1$ and $b<\sa[j]$. Equation~\ref{eq:phi} is correct iff no 1-bits were removed from the interval $B[b..\sa[j]]$. Otherwise, $\sa[j]$ lies in a partially valid area.

In the standard $sr$-index, the validity is detected by performing up to $s$ $LF$ steps starting from $j+1$. If a run head with a retained suffix array sample is encountered, the value of $\phi^{-1}(\sa[j])$ is recovered from that sample; otherwise, $B[b..\sa[j]]$ is guaranteed to be valid and Equation~\ref{eq:phi} applies.

\subsubsection{Fast $sr$-index variant}\label{sec:sr-index-fast}

The standard $sr$-index always performs up to $s$ $LF$ steps to test validity, even when Equation~\ref{eq:phi} applies. The fast variant (Cobas et al.~\cite{srindex2025}) avoids this overhead by explicitly marking invalid areas.

Let $b_a$ and $b_{a+1}$ be consecutive 1-positions in the subsampled $B$. A bitvector $V$ stores $V[a]=1$ if the interval $B[b_a..b_{a+1}-1]$ is fully valid, and $V[a]=0$ if it is partially valid. In the latter case, a prefix remains valid, while the complementary suffix is an invalid area. An array $M$ stores the lengths of valid prefixes in partially valid intervals. Let $b_a<b'<b_{a+1}$ be the leftmost cleared position in a partially invalid $B[b_a..b_{a+1}-1]$ ($V[a]=0$). The entry $M[rank_0(V, a)]$ stores $b'-b_{a}$.

Given a $\sa[j] \in [b_a, b_{a+1}-1]$, the operation $\phi^{-1}(\sa[j])$ is computed as follows:

\begin{enumerate}
    \item If $V[a]=1$, the area is valid and Equation~\ref{eq:phi} is correct.
    \item If $V[a]=0$ and $b+M[rank_0(V, a)]>\sa[j]$, $\sa[j]$ falls within a valid prefix and Equation~\ref{eq:phi} still applies.
    \item If $V[a]=0$ and $b+M[rank_0(V, a)]\leq \sa[j]$, then $\sa[j]$ falls within an invalid area and $\phi(\sa[j])$ is recovered via $LF$ steps.
\end{enumerate}

This scheme ensures that $LF$ traversal is only used when strictly necessary, significantly reducing query time in practice.

Figure~\ref{fig:sri} illustrates the components of the $sr$-index and its fast variant. 

\section{The VLB-tree: an adaptive representation of the BWT}\label{sec:vlb-tree}

A single random edit in the text typically creates multiple new short runs in the BWT. This happens because even a local modification may change the lexicographic order of many suffixes, which ultimately determines the ordering of symbols in the BWT. As a consequence, small edits may have a disproportionate impact on the number of BWT runs $r$, directly increasing the space usage of the $r$-index, whose size depends on that number. 

A key observation, however, is that these newly created short runs are not distributed uniformly across the BWT. In practice, some regions remain highly compressible after edits, while others become locally irregular. This phenomenon is also reflected in the BWT's ability to benefit from block-based compression, where independent zeroth-order encoding of blocks can still achieve high-order compression~\cite{ferragina2005indexing}.

The $r$-index does not fully exploit this property, as it treats all runs uniformly, regardless of their local density or distribution. This motivates an adaptive representation that varies its granularity according to the local distribution of runs, keeping compressible regions compact while recursively refining irregular ones.

We now introduce \emph{variable-length blocking} (VLB) to form a hierarchical representation of the BWT that realizes this adaptive strategy (the VLB-tree). Our encoding naturally enables a more cache-friendly layout, as large compressible regions can be processed with minimal memory accesses, while irregular regions are explored only when necessary.

\subsection{Overview of the VLB-tree}\label{sec:basic_end}

The VLB-tree decomposes $L$ into blocks whose sizes adapt to the local distribution of runs:

\begin{itemize}
    \item Compressible regions (few runs) are stored in large leaf blocks.
    \item Irregular regions (many runs) are subdivided recursively until their fragments have a manageable number of runs.
\end{itemize}

We begin by dividing $L$ into $n/\ell$ blocks $L_1,L_2,\ldots, L_{n/\ell}$ of size $\ell$, processed from left to right. For each $L_b$, we compute $rle(L_b)$ and compare its number of runs to a threshold $w$, the maximum allowed in a leaf.

\paragraph{Compressible blocks} If $rle(L_b)$ has fewer than $w$ runs, we try to extend it by concatenating consecutive blocks $L_{b}{\cdots}L_{b+z-1}$. We increase $z$ while $rle(L_{b,b+z-1})$ has at most $w$ runs and stop when $rle(L_{b,b+z})$ exceeds this threshold. The fragment $rle(L_{b,b+z-1})$ becomes a leaf superblock whose parent is the root of the tree. 

\paragraph{Incompressible blocks} If $rle(L_{b})$ has more than $w$ runs, $L_b$ is considered incompressible and is subdivided. Given a branching factor $f$, we split $L_{b}$ into $f$ equal-length parts of size $|L_b|/f$ and process them recursively from left to right using the same merging-and-splitting rule. Some parts may merge into a leaf superblock; others may subdivide further. The process adapts automatically to the local run structure.

The resulting structure places highly compressible regions close to the root, yielding small access costs, while irregular regions are recursively refined and incur access costs that reflect their local run complexity.

%The recursive partitioning by a constant factor $f$ simplifies access to the BWT. Given a position $L[i]$, find the block $b=\lceil i/\ell \rceil$, the block position $i=i - (b-1)\ell$, and apply the idea recursively with block size $\ell/f$ until a leaf is reached.

Recursive partitioning by a constant factor $f$ simplifies navigation to the appropriate fragment of the BWT. Given a position $L[i]$, we compute the top-level block $b=\lceil i/\ell \rceil$, its local position $i'=i-(b-1)\ell$, and apply the same procedure recursively with block size $\ell/f$ until a leaf is reached.

At the same time, the formation of superblocks reduces the number of nodes and thus the space overhead. Data removed by merging blocks is recomputed on the fly by scanning up to $w$ runs in a cache-friendly manner, which remains practical when $w$ is small. The only problem with superblocks is that they break direct index-based access to the BWT. We restore efficient navigation by storing compact bitvectors that map positions recursively to the correct child blocks.

\subsection{Layout and access in the VLB-tree}

Leaves of the VLB-tree store BWT fragments, while internal nodes store routing information that enables efficient navigation to these fragments.

Let $L^{u}$ denote the recursive fragment of $L$ associated with a node $u$. If $u$ is not the root, there exists a parent node $v$ whose fragment $L^{v}$ was partitioned into equal-length blocks, and where one of its blocks $L^{v}_b$ corresponds to the start of $L^{u}$.

Let $\Sigma^{v}=[1..\sigma^{v}]$ be the alphabet of $L^{v}$, and let $\Sigma^{u} \subseteq \Sigma^{v}$ be the subset of symbols appearing in $L^{u}$. The operator $alphacomp(L^{u})$ maps the symbols of $\Sigma^{u}$ to the compact alphabet $[1..\sigma^{u}]$. Unless stated otherwise, symbols in $L^{u}$ are referred to using their compacted identifiers.

Each node stores its local alphabet, which typically becomes smaller at deeper levels of the tree. Several auxiliary structures are defined with respect to $\Sigma^{u}$; the first is a prefix-count array used to answer rank queries.

Thus, the first two components stored at node $u$ are:

\begin{itemize}
    \item A bitvector $A[1..\sigma^v]$ that sets $A[c]=1$ if the parent symbol $c \in \Sigma^{v}$ appears in $L^{u}$.
    \item  A prefix-count array $O[1..\sigma^u]$ where the symbols of $\Sigma^{u}$ are indexed in compacted form. For each parent symbol $c \in \Sigma^{v}$ with $A[c]=1$, $O[rank_1(A, c)]$ stores the number of occurrences of $c$ in $L^{v}_{1,b-1}$.
\end{itemize}

The remaining components depend on whether $u$ is a leaf or an internal node.

\subsubsection{Leaves}

A leaf $u$ corresponds to a maximal fragment $L^u = L^v_b \cdots L^v_{b+z-1}$, with $z \geq 1$, such that $rle(L^u)$ contains at most $w$ runs, but including block $L^v_{b+z}$ would exceed $w$ runs. The leaf stores:

\begin{itemize}
    \item An array $E = rle(alphacomp(L^{u}))$.
\end{itemize}

\paragraph{Encoding of the runs}
We encode the runs in $E$ using a space-efficient layout that supports fast scans. Each run is stored as an integer $x$, whose first $p$ bits encode the run symbol and whose remaining bits encode the run length, where $p$ depends on the local alphabet of $E$. The encoding applies to these integers. Our design is inspired by the vbyte-based scheme of Lemire and Boytsov~\cite{fastvbytes} and enables parallel decompression, making it amenable to SIMD implementations.

We represent $E$ as a byte stream $C_1U_1 \cdot C_2U_2 \cdots C_kU_k$, where each 8-bit control byte $C_k$ specifies the byte lengths of the runs stored in the variable-length stream $U_k$. The number of runs packed in each $U_k$ depends on the largest run encoding in the leaf: eight runs if all fit in at most two bytes, four runs if up to four bytes are needed, and two runs otherwise (up to eight bytes). Accordingly, the bits of $C_k$ encode the byte length of each run.

The number of runs packed in each $C_kU_k$ block also determines the degree of parallel decompression. As a special case, when all runs fit in a single byte, we store $E$ in plain format. During VLB-tree queries, SIMD instructions are used to accelerate scans of $E$.

\subsubsection{Internal nodes}\label{sec:int_node_enc}

A node $u$ is internal when its fragment $L^{u}=L^{v}_b$ contains more than $w$ runs. In this case, we divide $L^{u}$ into $f$ equal-length parts $L^{u}_1, L^{u}_2, \ldots, L^{u}_f$ of length $\ell_u = |L^{u}|/f$, and process them recursively from left to right using the same merging-and-splitting rule.

Some parts become internal nodes (if they have $>w$ runs), while other consecutive parts within the same parent may merge into leaf superblocks (if their concatenation has $\leq w$ runs).  

The variable-length partition of $L^{u}$ prevents direct index-based access, but we restore efficient navigation with additional routing information. Let $L^{u}[i]$ be a position falling within a fragment $L^{e}=L^{u}_{o,o+z-1}$ of a child $e$ of $u$, with $z \geq 1$. We compute:

\begin{itemize}
    \item The index $b=\lceil i/\ell_u \rceil$ for the block $L^{u}_{b}$ where $i$ falls, with $b \in [o..o+z-1]$.
    \item The index $o$ for $L^{u}_{o,o+z-1}$. This value maps $L^{u}[i]$ to its position $i-(o-1)\ell_u$ within $L^{e}$.
    \item The child index $q$ for $e$ among the children of $u$. This information tells us what pointer to follow to visit $L^{e}$.
\end{itemize}

Accordingly, an internal node $u$ stores:

\begin{itemize}
    \item A bitvector $X[1..f]$ marking which of the $f$ parts begin a child. If $f=4$ and $X=1001$, then $L^{u}_{1},L^{u}_2,L^{u}_3$ defines the first child and $L^{u}_{4}$ the second, so $u$ has $f'=2$ children.   
    \item An array $P[1..f']$ containing pointers to the children of $u$.
\end{itemize}

We compute $o=pred_1(u.X, b)$, $q=rank_1(u.X, o)$, and follow the pointer $u.P[q]$ to reach $e$.

The branching factor $f$ is kept small (e.g. $f \leq 64$) so that $pred_1$ and $rank_1$ in $X[1..f]$ are implemented with bitwise operations on a single machine word. Figure~\ref{fig:simp_vlbt_enc} and Example~\ref{ex:simp_vlbt_enc} illustrate the VLB-tree construction.

\begin{figure*}[t]
\centering
%\tikzsetnextfilename{vlb-tree}
\includegraphics[width=0.7\linewidth]{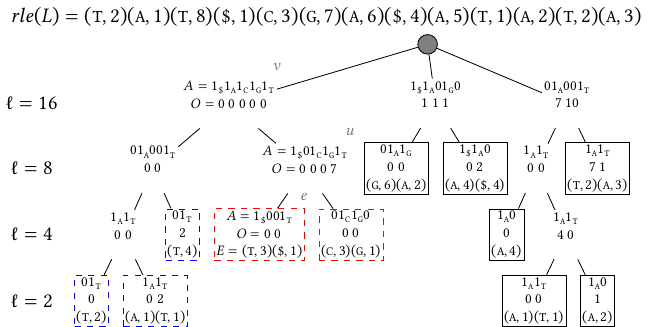}
\caption{VLB-tree built on the BWT $L$ from Figure~\ref{fig:sri}. The parameters are $\ell=16$, $f=2$, and $w=2$. In nodes $v, u, e$, we label the stored node fields (omitted elsewhere for readability). We also omit array $X$ as it is the same in all nodes ($X=11$, no superblocks), and the pointer array $P$. In the alphabet bitvector $A$, a $1$ indicates that the corresponding parent symbol appears in the node. Subscripts annotate each $1$ (and the run-length sequence $E$) with the original text symbols. Dashed boxes correspond to Example~\ref{ex:simp_vlbt_enc}.} 
\label{fig:simp_vlbt_enc}
\Description{Example of the VLB-tree}
\end{figure*}

\begin{example}\label{ex:simp_vlbt_enc}
The VLB-tree of Figure~\ref{fig:simp_vlbt_enc}. The sequence of $L$ is initially partitioned into $\lceil n/\ell \rceil= \lceil 45/16 \rceil = 3$ blocks $L_1,L_2,L_3$. The first block $rle(L_1)=(\texttt{T},2)(\texttt{A},1)(\texttt{T},8)(\texttt{\$},1) (\texttt{C},3)(\texttt{G},1)$ has $6>w$ runs, so we create a new internal node $v$. The local alphabet of $L_1$ has all the symbols of $\Sigma$ and $L_1$ does not have a left context, so the alphabet bitvector is $v.A=11111$ and the prefix-count array is $v.O=0 0 0 0 0$. We now build the subtree of $v$ using $L^{v}=alphacomp(L_1)$ and block size $\ell=16/2=8$. $L^v$ has two parts $L^{v}_1,L^{v}_2$, and their compressed sequences are $rle(L^{u}_1)=(\texttt{T},2)(\texttt{A},1)(\texttt{T},5)$ (blue dashed boxes) and $rle(L^{u}_2)=(\texttt{T},3)(\texttt{\$},1) (\texttt{C},3)(\texttt{G},1)$ (red dashed boxes). Both yield internal nodes as they have more than $w$ runs each. Assume that we have already processed $L^{v}_1$, and now we have to create an internal node $u$ for $L^{u} = alphacomp(L^{v}_2)$. The local alphabet of $L^{v}_2$ is $\Sigma^{u} = \{\$,\texttt{C},\texttt{G},\texttt{T}\}$, so the alphabet bitvector becomes $u.A=10111$. From the symbols in $L^v_2$, only $\texttt{T}$ (compact index $4$) occurs in the left context $L^{v}_{1}$ (seven times), meaning that the prefix-count array is $u.O=0007$. The block size now becomes $\ell=8/2=4$ and the rest of the tree is computed similarly.
\end{example}

\subsubsection{The root as special case}\label{sec:root_enc}

The VLB-tree root represents the entire sequence $L$, and therefore does not store the prefix-count array. However, unlike internal nodes\emdash which have at most $f$ children\emdash the root may have up to $\lceil n/\ell \rceil$ children. Storing the bitvector $X[1..\lceil n/\ell \rceil]$ with fast $rank_1$ and $select_1$ operations would be costly.

Instead, we use an array $P[1..\lceil n/\ell \rceil]$ that encodes both the children's pointers and collapsed runs of zeros in $X$. If the (logical) bitvector $X$ has $X[b]=1$, then $P[b]$ stores a pointer to child $q=rank_1(X,b)$ of the root. If $X[b]=0$, then $P[b]$ stores the offset $b-pred_1(X, b)$. The least significant bit of $P[b]$ indicates whether it is a pointer or an offset.

To locate the root child covering $L[i]$, we compute the top-level block $b = \lceil i/\ell \rceil$, read $P[b]$, and either follow the pointer directly (if it is a pointer), or follow the pointer stored at $P[b-P[b]]$ (if $P[b]$ is an offset).

Although $P$ may be larger than storing $X$ plus rank and predecessor data structure, it keeps navigation cache-friendly.

\begin{algorithm}[t]
\DontPrintSemicolon
\caption{$access(L, i)$}\label{algo:access}

$u \gets \text{root of the VLB-tree}$\;
$b \gets \lceil i/\ell \rceil, o \gets b$\;
\lIf{$P[b]$ is an offset}{$o \gets b-P[b]$}

$i \gets i - (o-1)\ell$\;
$\ell \gets \ell/f$\;
$u \gets \text{node in } u.P[o]$\;
${P} \gets \emptyset$\;

\While{$u \neq \text{leaf}$}{
$P \gets P \cup \{u\}$\;
$b \gets \lceil i/\ell \rceil, o \gets pred_1(u.X, b), q \gets rank_1(u.X, o)$\;
$i \gets i - (o-1)\ell$\;
$\ell \gets \ell/f$\;
$u \gets \text{node in } u.P[q]$\;
}

$c \gets \text{symbol of the run } u.E[z] \text{ that contains } i$\;
$c \gets select_1(u.A, c)$\;

\For{$j \gets |P|$ \textbf{to} $1$}{
$u \gets P[j]$\;
$c \gets select_1(u.A, c)$\;
}

\Return {$c$}\;
\end{algorithm}

\paragraph{Practical considerations.}
The construction of the tree uses the parameters $\ell$, $f$, and $w$. We assume $\ell > w$ and that both $\ell$ and $w$ are powers of $f$; that is, $\ell = f^{x}$ and $w = f^{y}$ for some integers $x$ and $y$. All nodes are stored in a contiguous memory region in depth-first (DFS) order. Each node’s encoding is byte-aligned, so the pointers in the array $P$ are byte offsets into this memory region.

\subsection{Accessing positions of the BWT}\label{sec:acc}

For any position $i$, define $path(i) = u_1, u_2,\ldots u_l$ as the list of nodes in the VLB-tree visited to locate $L[i]$, where $u_1$ is the root and $u_l$ is a leaf.

The operation $access(L, i)$ consists of three phases: a top-down descent to identify the leaf containing position $L[i]$, a local scan of the leaf to find the compacted symbol, and a bottom-up ascent to recover the original symbol in $\Sigma$. 

The descent follows the navigation mechanism described in Sections~\ref{sec:int_node_enc} and~\ref{sec:root_enc}. Starting from the root, we repeatedly use the routing information stored at each internal node to select the child fragment containing the current position and update $i$ to its local offset within that fragment. The descent terminates at a leaf $u_l$ such that $L[i] \in L^{u_l}$.

When the descent halts at the leaf $u_l$, we scan $u_{l}.E$ to find the leftmost run $u_l.E[z]=(c, l)$ whose cumulative length reaches $i$.

We then ascend along $u_l, u_{l-1},\ldots,u_2$ updating the compacted symbol $c$ at each level to the corresponding parent symbol $c=select_{1}(u_j.A,c)$. This process yields the original $L[i]$ in $\Sigma$.

Algorithm~\ref{algo:access} shows the process in more detail.

\begin{theorem}
Let $\mathcal{T}$ be a VLB-tree built on top of the BWT $L$ of $S$ with parameters $f, w$, and $\ell$. Accessing $L[i]$ in $\mathcal{T}$ takes $O(\log_f (\ell/w) + w)$ time.
\end{theorem}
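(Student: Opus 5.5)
We follow the three phases of Algorithm~\ref{algo:access} and bound each one separately: the top-down descent, the scan of the leaf, and the bottom-up ascent. The total cost is then the cost per visited node times the length of $path(i)$, plus the cost of the leaf scan.

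\emph{Cost per node.} At the root we compute $b=\lceil i/\ell\rceil$ and read $P[b]$. If $P[b]$ is an offset, one more lookup gives $P[b-P[b]]$. This is $O(1)$ work. At an internal node $u$ the descent computes $pred_1(u.X,b)$ and $rank_1(u.X,o)$. Because $f$ is at most the word size, each of these is a constant number of bitwise operations on one word. The offset update and the pointer dereference are also $O(1)$. On the ascent, each node needs one query $select_1(u.A,c)$. We assume $A$ is stored with a constant-time select structure, or that $\sigma^v$ is small enough for $A$ to fit in $O(1)$ words so broadword select applies; we would state this assumption explicitly. Under it, every node on $path(i)$ costs $O(1)$ in both directions.

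\emph{Depth of $path(i)$.} This is the main step. A node at depth $d\ge 1$ below the root covers a fragment built from parts of length $\ell/f^{d-1}$. The fragment belongs to an internal node only if it has more than $w$ runs. A fragment of length at most $w$ has at most $w$ runs, so it can never be internal and always ends up inside a leaf superblock. The block length is divided by exactly $f$ at each level, and $\ell$ and $w$ are powers of $f$. Hence once the length has dropped to $w$, which happens after $\log_f(\ell/w)$ levels, every remaining part is absorbed into a leaf. Counting the root, the number of internal nodes on $path(i)$ is therefore at most $1+\log_f(\ell/w)$, so $|path(i)| = O(\log_f(\ell/w))$. The one subtle point is that merging into superblocks does not increase depth: merging only happens among siblings at the same level, so a leaf sits no deeper than the level at which its first part was created.

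\emph{Leaf scan and conclusion.} The leaf $u_l$ stores $E$ with at most $w$ runs. Scanning it while accumulating run lengths until the running sum reaches $i$ takes $O(w)$ time. Decoding each run from the control-byte stream $C_kU_k$ is $O(1)$ per run, since a control byte describes a constant number of runs. Adding the three phases gives $O(\log_f(\ell/w))$ for the descent, $O(w)$ for the scan, and $O(\log_f(\ell/w))$ for the ascent. The total is $O(\log_f(\ell/w)+w)$.
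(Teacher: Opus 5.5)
Your proposal is correct and follows essentially the same argument as the paper. The height is bounded by the number of times the block length can be divided by $f$ going from $\ell$ down to $w$, since a fragment of length at most $w$ cannot have more than $w$ runs and so is never subdivided; adding an $O(w)$ scan of the leaf gives the bound. You also spell out points the paper's proof leaves implicit: the $O(1)$ per-node cost (word-sized $X$, and your stated assumption of constant-time $select_1$ on $A$ during the ascent) and the fact that superblock merging does not increase depth.
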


\begin{proof}
The total time is the height of $\mathcal{T}$ plus the time spent scanning a leaf. During the construction of $\mathcal{T}$, blocks of size $\ell$ are repeatedly divided by a factor $f$ until the resulting fragments have at most $w$ runs. We choose $\ell=f^{x}$ and $w=f^{y}$ with $x>y$. A node at depth $h$ encodes a fragment of length $\ell/f^{h-1}$. These fragments range from size $\ell$ (children of the root) down to $w$ (where the subdivision stops). Thus the maximum depth is $\log_f \ell - \log_f w = \log_f (\ell/w)$ nodes. Each leaf has $\leq w$ runs, which gives the additive term $w$.
\end{proof}

The worst case arises when roughly $\ell$ BWT runs of length~1 occur consecutively\emdash an uncommon situation in highly repetitive data. If $i$ falls inside a leaf superblock attached directly to the root\emdash a common case in repetitive inputs\emdash, then $path(i)$ has two nodes, so the time drops to $O(w)$. Choosing a small threshold $w$ (e.g., $\approx 64$) makes the final linear scan over $E$ effectively constant using SIMD. The number of cache misses along $path(i)$ is bounded by $\log_f(\ell/w)$ (assuming each node fits in cache), and deeper nodes exhibit better locality. A small $w$ also keeps $E$ cache-resident, further reducing misses.

\subsection{Adding support for rank and successor}\label{sec:aug_enc}

The compressed suffix array requires rank and successor queries to support pattern matching. However, implementing these operations in the VLB-tree is challenging because each node only stores information about its local alphabet, whereas the queries receive as input an arbitrary symbol $c$ and a position $i$. We augment the VLB-tree to address this issue. 

The operations $rank_c(L, i)$ and $succ_c(L, i)$ return different information, but they can be implemented using the same traversal strategy in the VLB-tree. We begin by following $path(i)=u_1,u_2 ,\ldots, u_l$ and stop at the first node whose local alphabet does not contain $c$. At that level, the traversal diverges to a close node (typically a sibling) that leads to the answer. In the case of the rank query, this corresponds to the one leading to the closest occurrence of $c$ to the left of $i$ ($pred_c(L, i)$), while for the successor query, it leads to the closest occurrence of $c$ to the right of $i$. 

Our current encoding does not allow us to infer the diverging nodes efficiently, so we will augment the internal nodes with additional routing information to visit these paths.

Consider an internal node $u$ corresponding to the fragment $L^{u}$. Its compacted alphabet is $\Sigma^{u} = [1..\sigma^{u}]$, and the partition of $L^{u}$ induces $f^u$ children for $u$. We add:

\begin{itemize}
    \item A bitvector $N[1..f^u\sigma^{u}]$ storing the concatenation of $\sigma^{u}$ bitvectors $N_{1}[1..f^u]{\cdot}N_{2}[1..f^u]{\cdots}N_{\sigma^{u}}[1..f^u]$. For a symbol $c \in \Sigma^{u}$ and the child index $q \in [1..f^{u}]$, $N_c[q]=1$ iff the $q$-th child of $u$ contains $c \in \Sigma^{u}$.
\end{itemize}

The bitvector $N$ indicates whether the symbol $c$ appears in each child of $u$, and using $succ_1$ and $pred_1$ queries on $N_c$ redirects the traversal to the nearest left/right sibling containing $c$. When $N_c$ fits a machine word, these queries run in $O(1)$ time via word operations. The additional $f^{u}\sigma^{u}$ bits per internal node impose only a small overhead because $f^{u} \leq f$ and the local alphabet size $\sigma^{u}$ is small in repetitive texts and typically decreases at deeper tree levels.

The root also requires such routing information. However, the size of its corresponding bitvector can reach $\sigma(n/\ell)$ bits, too large to fit a machine word, thus requiring auxiliary data structures to support $succ_1$ and $pred_1$. These extra structures would likely reside far from the root in memory, causing cache misses\emdash similarly to the issue with bitvector $X$ (Section~\ref{sec:basic_end}).

To avoid this overhead, we reorganize the routing information of the root by distributing it across its children. 

For each root child $u$, we create:

\begin{itemize}
    \item An array $Z[1..\sigma]$ that stores in $Z[c]$ the distance between $u$ and its closest right-hand sibling $y$ such that $c$ occurs in $L^{y}$. Thus, if $u$ and $y$ are the children $q$ and $q'$ of the root (respectively), then $u.Z[c]=q'-q$. 
\end{itemize}

We use $Z$ only when redirection is required from the second node (root child) in the traversal, ensuring that the necessary routing information is local and avoiding cache misses.

The aggregate space of all $Z$ arrays may exceed that of the root's bitvector $N$ and its auxiliary data structures, but we adopt this design to preserve cache efficiency. In Section~\ref{sec:samp}, we describe a technique to reduce the space of $Z$ with no penalty in query time.

\begin{figure}[t]
\centering
\begin{minipage}[t]{0.49\textwidth}
\begin{algorithm}[H]
\DontPrintSemicolon
\caption{$rank_c(L, i)$}\label{algo:rank_q}
$u_2 \gets \text{ root child in } path(i) $\;
\If{$u_2.A[c]=0$}{
$y_1 \gets \text{node indicated by } u_2.Z[c]$\;
\Return{$y_1.O[rank_1(y_1.A, c)]$}
}

$r \gets 0$\;

\For{$u_j \in u_2, u_3, \ldots, u_l \subset path(i)$}{
    $c \gets rank_1(u_j.A, c)$\;
    $r \gets r + u_j.O[c]$\;
    $q \gets \text{ child index of } u_{j+1} \text{ in } u_j$\;
    \If{$u_j.N_c[q]=0$}{
        $y_1 \gets \text{nearest left sibling of } u_{j+1} \text{ with } c$\;
        \lIf{$y_1=\textsc{NULL} \text{\textbf{ then}}$}{\Return{$r$}}
        \For{$y \in y_1, y_2, \ldots, y_l$}{
            $c \gets rank_1(y.A, c)$\;
            $r \gets r + y.O[c]$\;
        }
        \Return {$r + \text{cumulative run lengths of } c \text{ in } y_l.E$}\;
    }
}
$r' \gets \text{cum. run lengths of } c \text{ in } u_l.E \text{ up to local } i$\; 
\Return {$r + r'$}\;
\end{algorithm}
\end{minipage}
\begin{minipage}[t]{0.49\textwidth}
\begin{algorithm}[H]
\DontPrintSemicolon
\caption{$succ_c(L, i)$}\label{algo:succ_q}

$u_2 \gets \text{ root child in } path(i) $\;

\If{$u_2.A[c]=0$}{
$y_1 \gets \text{node indicated by } u_2.Z[c]$\;
\For{$y \in y_1, y_2, \ldots, y_l$}{\label{code:descend_start}
    $c \gets rank_1(y.A, c)$\;
}
\Return{$\text{global idx of the first } c \text{ in } y_l.E[z]$}\label{code:descend_end}
}

$c_k \gets c, p_k \gets u_2.Z[c]\text{ }$\Comment{candidate for divergence}

\For{$u_j \in u_2, u_3, \ldots, u_l \subset path(i)$}{

    $q \gets \text{ child index of } u_{j+1} \text{ in } u_j$\;
    $q' \gets succ_1(u_{j}.N_c, q+1)\text{ }$\Comment{next sib. of $u_j$ with $c$}
    \lIf{$q' \neq \textsc{NULL}$ \text{\textbf{then}}}{ $c_k \gets c$, $p_k \gets u_j.P[q']$ }
    
    \If{$u_j.N_c[q]=0$}{
        $c \gets c_k$, $y_1 \gets \text{ node indicated by } p_k$\;
        execute Lines~\ref{code:descend_start}-\ref{code:descend_end}\;
    }
    $c \gets rank_1(u_j.A, c)$\;
}

$\text{find the first run } u_l.E[z] \text{ for } c \text{ after local } i \text{ in } u_l$\;

\lIf{$u_l.E[z]$ \text{is found \textbf{then}}}{\Return{global idx its first $c$}}

$c \gets c_k$, $y \gets \text{node indicated by } p_k$\;
execute Lines~\ref{code:descend_start}-\ref{code:descend_end}\;
\end{algorithm}
\end{minipage}
\Description{Pseudocode for $rank$ and $succ$ queries}
\end{figure}

\subsubsection{Rank algorithm}\label{sec:rank_q}

We begin the algorithm for $rank_c(L, i)$ by descending $path(i) = u_1, u_2, \ldots, u_l$ as described in Section~\ref{sec:acc}. Throughout the process, the variable $c$ is updated to denote the compacted symbol corresponding to the current node.

\paragraph{Root-child special case} If $u_2.A[c]=0$, the fragment $L^{u_2}$ does not contain $c$, so we visit the right-hand sibling $y_1$ indicated by $u_2.Z[c]$ and report the prefix count from $y_1.O[c]$. Node $y_1$ lies on $path(succ_{c}(L, i))$, not on  $path(pred_c(L, i))$, which we previously stated would be used for $rank_c$. However, the answer in $y_1.O$ is still correct because it considers the occurrences of $c$ preceding $L^{u_2}$. This is exceptional because $u_2.Z$ does not provide information for redirecting to a left-hand sibling.

\paragraph{General internal node} This case assumes $u_2.A[c]=1$. For an internal node $u_{j} \neq u_{2}$ with child index $q$ for $u_{j+1}$, if $u_{j}.N_c[q]=0$, then $u_{j+1}$ does not contain $c$. We therefore compute $y_1 = pred_1(u_j.N_c, q-1)$, the closest left-hand sibling of $u_{j+1}$ that contains $c$. If there is no $y_1$, we return the cumulative prefix count for $c$ collected up to $u_j$. If $y_1$ exists, we descend the rightmost branch $y_1,y_2, \ldots, y_l$ that contains $c$, computing each successive child as $y_{j+1}= pred_1(y_{j}.N_{c}, f^{y_j})$, where $f^{y_j}$ is the number of children for $y_j$. Upon reaching the leaf $y_l$, we scan its entire sequence $y_l.E$, count all occurrences of $c$, and add them to the cumulative rank for $c$ along the nodes $u_2,\ldots,u_j$ and the redirected branch $y_1,\ldots y_l$. The full leaf scan is necessary since the subtree of $y_1$ lies entirely before position $i$, so all occurrences in $y_l$ contribute to rank. The final rank value is the answer for $rank_{c}(L, i)$. 

Algorithm~\ref{algo:rank_q} illustrates the procedure.

\subsubsection{Successor algorithm}\label{sec:succ_q} 

As mentioned above, $succ_c(L, i)$ operates almost the same way as $rank_c(L, i)$. The most notable change is the node we search after diverging. The query returns a null value if no valid successor for $c$ exists in $L[i..n]$. 

We descend $path(i)$, compacting $c$ accordingly, until we reach the first node $u_{j}$ where $u_{j+1}$ does not contain $c$ ($u_j.N_c[q]=0$). From there, we locate the deepest ancestor $u_k \in path(i)$, with $k \leq j$, such that $L^{u_{k}}$ has an occurrence of $c$ to the right of the local $i$.

\begin{figure}[t]
\centering
%\tikzsetnextfilename{succ_rank_img}
\includegraphics[width=0.25\linewidth]{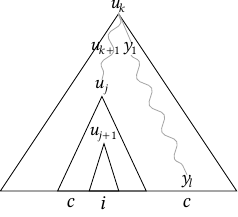}
\caption{Finding the successor node containing $c$.} 
\Description{Finding the successor node}
\label{fig:succ_q}
\end{figure}

Once we locate $u_{k}$, we move to the closest right-hand sibling $y_1$ of $u_{k+1}$, and continue descending the leftmost branch $y_1,y_2,\ldots,y_l$ that contains $c$. The initial compacted symbol is $c_{k}$ and each subsequent node $y_{j+1}$ is the leftmost child of $y_{j}$ that contains $c$. Upon reaching $y_l$, the algorithm scans $y_{l}.E$ until it finds the first run $y_l.E[z]$ for $c$, and reports the index in $L$ for the first $c$ in $y_l.E[z]$ (i.e., its global index). Figure~\ref{fig:succ_q} illustrates the divergence mechanism.

If no stopping node $u_j$ is found and the descent of $path(i)$ reaches its leaf $u_{l}$, the algorithm scans $u_{l}.E$ to locate the first run $u_l.E[z]$ of $c$ such that the cumulative run length is equal to or greater than the local $i$. If such a run exists, it returns its global index in $L$. If the run does not exist, $c$ does not occur in suffix $L^{u_{l}}[i..]$, so the algorithm follows the leftmost branch of $y_1$ as described before.

\paragraph{Finding the branching node} To avoid backtracking $path(i)$ to find $u_k$, during descent, we maintain, for each internal node $u_{j} \in path(i)$ with $j\geq2$, a pointer $p_k$ to a candidate redirection node $y_1$ together with the compacted value $c_{k}$ of $c$ in $alphacomp(L^{u_{k}})$. 

For a root child $u_2$, $p_k$ is $u_2.Z[c]$ and $c_{k}=c$ (or null if no valid successor exists). For a general internal node $u_j \neq u_2$ where the child index for $u_{j+1}$ is $q$, $p_k$ points to the next right-hand sibling $succ_{1}(u_{j}.N_c, q+1)$ containing $c$ and $c_{k}$ is the compacted $c$ in $u_{j}$. If no valid pointer $p_k$ is found (i.e., $u_{j+1}$ does not have a right-hand sibling for $c$), the pair $(p_k, c_{k})$ is inherited from $u_{j-1}$.

Once the stopping node $u_{j}$ is found, the algorithm moves to the node $y_1$ indicated by $p_k$, updates $c=c_k$, and starts the descent of the leftmost branch for $c$ in $y_1$. On the other hand, if $p_k$ is a null pointer, $succ_c(L, i)$ returns a null value.

Algorithm~\ref{algo:succ_q} shows the successor mechanism in more detail.

%\begin{algorithm}[t]
%\DontPrintSemicolon
%\caption{$succ_c(L, i)$}\label{algo:succ_q}
%
%$u_2 \gets \text{ root child in } path(i) $\;
%
%\If{$u_2.A[c]=0$}{
%$y_1 \gets \text{node indicated by } u_2.Z[c]$\;
%\For{$y \in y_1, y_2, \ldots, y_l$}{\label{code:descend_start}
%    $c \gets rank_1(y.A, c)$\;
%}
%\Return{$\text{global idx of the first } c \text{ in } y_l.E[z]$}\label{code:descend_end}
%}
%
%$c_k \gets c, p_k \gets u_2.Z[c]\text{ }$\Comment{candidate for divergence}
%
%\For{$u_j \in u_2, u_3, \ldots, u_l \subset path(i)$}{
%
%    $q \gets \text{ child index of } u_{j+1} \text{ in } u_j$\;
%    $q' \gets succ_1(u_{j}.N_c, q+1)\text{ }$\Comment{next sib. of $u_j$ with $c$}
%    \lIf{$q' \neq \textsc{NULL}$ \text{\textbf{then}}}{ $c_k \gets c$, $p_k \gets u_j.P[q']$ }
%    
%    \If{$u_j.N_c[q]=0$}{
%        $c \gets c_k$, $y_1 \gets \text{ node indicated by } p_k$\;
%        execute Lines~\ref{code:descend_start}-\ref{code:descend_end}\;
%    }
%    $c \gets rank_1(u_j.A, c)$\;
%}
%
%$\text{find the first run } u_l.E[z] \text{ for } c \text{ after local } i \text{ in } u_l$\;
%
%\lIf{$u_l.E[z]$ \text{is found \textbf{then}}}{\Return{global idx its first $c$}}
%
%$c \gets c_k$, $y \gets \text{node indicated by } p_k$\;
%execute Lines~\ref{code:descend_start}-\ref{code:descend_end}\;
%\end{algorithm}

\paragraph{A modified successor query} The operation $backwardsearch(P)=(sp_1, ep_1, \sa[sp_1])$ internally uses a variant of $succ_c(L, i)=h_{\rho}$ that returns the index $\rho$ of the run $L[h_{\rho}..t_{\rho}]$ where $h_{\rho}$ lies. This index indicates the suffix sample $\sa_{h}[\rho]$ from which the toehold $\sa[sp_1]$ is derived (Equation~\ref{eq:toehold}). Our VLB-tree design will distribute $\sa_h$ in the leaves to improve cache efficiency (Section~\ref{sec:cnt_toehold}) so an index $\rho$ does not necessarily tell where $\sa_h[\rho]$ is stored. We define a more convenient query:  

\begin{itemize}
    \item $msucc_c(L, i)$: Let $succ_c(L, i)=i'$ be a successor index for $c$ and let $L[h_{\rho}..t_{\rho}]$ be the run where $i'$ falls in $L$. The \emph{memory successor} query $msucc_c(L, i)$ returns the byte offset within the VLB-tree where $\sa_h[\rho]$ resides. 
\end{itemize}

\paragraph{Theoretical cost} We note that the theoretical cost of the rank and successor algorithms remains the same as that of $access(i)$ (Section~\ref{sec:acc}), and they incur $O(\log_f (\ell/w))$ additional cache misses if they have to diverge the traversal. 

\section{Counting with toehold}\label{sec:cnt_toehold}

The current VLB-tree encoding (Section~\ref{sec:aug_enc}) already supports the query $rank_c(L, i)$, which is sufficient to compute the backward-search intervals $(sp_l, ep_l)$ using Equation~\ref{eq:bs}. The next step is to use Equation~\ref{eq:toehold} to obtain the toehold $\sa_h[sp_1]$. We need:

\begin{itemize}
    \item To determine the step $g$ and its position $sp_g$.
    \item To retrieve $\sa_h[\rho]$ given $g$ and $sp_g$. 
\end{itemize}

We describe how to support both operations in the VLB-tree.

\subsection{The step for the toehold} 

Cobas et al.~\cite{srindex2025} present a simple method to obtain $g$ and $sp_g$: initially, set $g=m+1$ and $sp_g=1$, and run the backward search. Whenever a step $(sp_l, ep_l)$ satisfies $L[sp_l]\neq P[l-1]$, update $g=l$ and $sp_g=sp_l$. The final values are obtained once $(sp_1, ep_1)$ is reached. 

The challenge is to test $L[sp_l] \neq P[l-1]$ in a cache-friendly manner because Equation~\ref{eq:bs} does not necessarily visit the area for $L[sp_l]$. We define a new query $\mathit{headrank}_{c}(L, i) = (o, b)$ in the VLB-tree, where $o$ is the number of occurrences of $c$ in $L[1..i]$ and $b$ is a bit set to $1$ if $L[i] = c$.

This operation behaves like $\mathit{rank}_c(L, i)$ (Section~\ref{sec:rank_q}), with one additional check:

\begin{itemize}
    \item If $path(i)$ diverges from $path(pred_c(L, i))$, then $L[i] \neq c$ and we set $b = 0$.
    \item Otherwise, both paths coincide. Let $v$ be the leaf reached and let $i'$ be the local position of $i$ in $L^{v}$. We set $b=1$ if $L^{v}[i']$ matches $c$, and $b=0$ otherwise.
\end{itemize}

A subtle issue arises because Equation~\ref{eq:bs} infers $sp_{l-1} = C[P[l-1]] + rank_{P[l-1]}(L, sp_{l}-1)+1$, whereas we now perform rank at position $sp_l$. Since $L[sp_l-1]$ and $L[sp_l]$ may lie in different fragments of the VLB-tree, we adjust the computation as follows.

We first run $headrank_{P[l-1]}(L, sp_l) = (o, b)$. If $b = 1$, we decrement $o$ by one; otherwise, $o$ is left unchanged. The value $sp_{l-1} = C[P[l-1]] + o+1$ is therefore identical to the original formulation.

\subsection{Suffix array samples}~\label{sec:vlbt-sa-samp}
After computing $g$ and $sp_g$, we must retrieve the index $\rho$ such that $succ_{P[g-1]}(L, sp_g)=h_{\rho}$. The VLB-tree query $succ_{P[l-1]}(L, sp_g)$ (Section~\ref{sec:succ_q}) returns the index $h_{\rho}$, but this alone does not identify where $\sa_h[\rho]$ is stored.

We address this by augmenting the leaves of the VLB-tree with suffix array samples. For a leaf $v$ whose sequence $v.E$ contains $m'$ runs, we store an array $W[1..m']$ such that $W[z]$ equals the value of $\sa_h$ associated with the $z$-th run of $v.E$ (from left to right).

With this structure in place, retrieving $\sa_h[\rho]$ is straightforward. Given $g$ and $sp_g$, we call $\mathit{msucc}_{P[g-1]}(L, sp_g)$ (Section~\ref{sec:aug_enc}) to obtain the leaf $v$ containing the BWT position $h_{\rho}$. Let $v.E[z]$ be the run where $h_{\rho}$ lies; the desired suffix array sample is $v.W[z] = \sa_h[\rho]$.

A key advantage of this design is that $msucc_{P[g-1]}(L, sp_g)$ and the subsequent retrieval of $\sa_h[w]$ are performed in the same memory area (the leaf $v$), ensuring excellent locality. In contrast, traditional $r$-index implementations store $\sa_h$ and $L$ in different memory locations.

\paragraph{Handling run splitting} Because the tree construction may split BWT runs across leaves, the first and last elements of $v.E$ may be \emph{partial}. In particular, $v.E[1]$ is partial if the symbol preceding $L^{v}$ in $L$ is the same as the symbol of $v.E[1]$. Each leaf stores a flag indicating whether this situation occurs.

When $v.E[1]$ is partial, its corresponding BWT area is a proper suffix $L[t_{\rho}-q+1..t_{\rho}]$ of a run $L[h_{\rho}..t_{\rho}]$. Since the values of $\sa_h$ are associated with run heads, and $v.E[1]$ does not include $L[h_{\rho}]$, the array $v.W$ does not have a value for $v.E[1]$. In this case, each entry $v.W[z]$ corresponds to $v.E[z+1]$.

This situation does not affect the retrieval of $\sa_h[\rho]$. The positions $sp_g$ and $h_{\rho}$ always belong to different BWT runs. Consequently, if $h_{\rho}$ lies in the run $v.E[z]$ of some leaf $v$, then $v.E[z]$ cannot be partial and $v.W$ is guaranteed to have a sample for $h_{\rho}$.

\section{Reducing the space usage of the VLB-tree}\label{sec:samp}

A drawback of the VLB-tree that we have not yet addressed is the space overhead introduced by the satellite data stored at the root level. Each child $u_2$  of the root stores an array $Z$, which allows the traversal of $path(i)=u_1, u_2, \ldots, u_l$ to jump to an appropriate right-hand sibling whenever a divergence is required. This information is used when answering both rank and successor queries (Section~\ref{sec:aug_enc}). Since the root may have up to $n/\ell$ children and each $Z$ array has $\sigma$ entries, the space overhead can reach $\sigma(n/\ell)$ bits in the worst case.

In this section, we show how to reduce this cost by sampling each array $Z$ while still ensuring that the operation $backwardsearch(P[1..m])$ works correctly whenever $P$ occurs in $S$. When $P$ is not in $S$, operations $rank$, $headrank$, and $msucc$ may fail, returning zero or null when they should return a non-zero or non-null value. This behavior is acceptable, since this function naturally returns an empty range ($sp_{1}>ep_{1}$) when $P$ does not appear in $S$.

\paragraph{Intuition} Each array $Z$ does not need to store entries for \emph{all} symbols in $\Sigma$. Consider the suffix array range $\sa[sp_l..ep_l]$  representing the suffixes of $S$ prefixed by $P[l..m]$. The next backward-search step computes $rank_{P[l-1]}(L, sp_l)$ (or $headrank$ in our case) to obtain $sp_{l-1}$. However, if the pattern $P[l-1..m]$ does not exist in $S$, there is no need to store an entry for $P[l-1]$ in the array $Z$ of the root child $u_2$ containing position $sp_l$. We now formalise this idea.

Let $\mathcal{S} \subset \{(s,e) \in [n]\times[n] \mid s \le e\}$ be the set of valid ranges\footnote{For readers familiar with suffix trees, $\mathcal{S}$ contains the ranges associated with the nodes of the suffix tree of $S$.} in $\sa$ that $backwardsearch(P)$ may visit when computing $(sp_1,ep_1)\in \mathcal{S}$ for any pattern $P$ in $S$.

For a pair $(p,q)$ with $1 \le p \le q \le n$, define

\begin{equation*}
\begin{array}{l}
\mathcal{L}(p,q) = \{ (s,e) \in \mathcal{S} \mid s \le p \le e \le q \} \text{ and } \\
\mathcal{R}(p,q) = \{ (s,e) \in \mathcal{S} \mid p \le s \le q \le e \},
\end{array}
\end{equation*}

representing the sets of suffix array ranges overlapping the left and right ends of $(p,q)$, respectively.

The \emph{leftmost} and \emph{rightmost} overlaps of $(p,q)$ are 

\begin{equation*}
\begin{array}{l}
lmo(p,q) = \arg\min_{(s,e)\in \mathcal{L}(p,q)} s \text{ and } \\
rmo(p,q) = \arg\max_{(s,e)\in \mathcal{R}(p,q)} e,
\end{array}
\end{equation*}

respectively, breaking ties for $lmo(p, q)$ by choosing the pair with the largest $e$, and for $rmo(p,q)$, the pair with the smallest $s$. 

For a BWT substring $L[p..q]$ and its overlapping suffix array ranges $(a,b)=lmo(p,q)$ and $(y,z)=rmo(p,q)$, let $\Sigma^{(a,p-1)} \subseteq \Sigma$ and $\Sigma^{(q+1,z)} \subseteq \Sigma$ be the alphabets of $L[a..p-1]$ and $L[q+1..z]$, respectively.  

\begin{lemma}\label{lem:samp}
Let $u_2$ be a root child in the VLB-tree with fragment $L^{u_2}=L[p..q]\in \Sigma^{*}$. The array $u_2.Z$ can be sampled so that it stores entries only for symbols in $\Sigma^{(a,p-1)} \cup \Sigma^{(q+1,z)}$, and $backwardsearch(P)$ still returns the correct result whenever $P$ occurs in $S$.
\end{lemma}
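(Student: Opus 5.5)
Our approach is to trace exactly when $u_2.Z[c]$ is consulted during $backwardsearch(P)$ for a pattern $P$ occurring in $S$, and to show that in every such consultation $c$ lies in $\Sigma^{(a,p-1)}\cup\Sigma^{(q+1,z)}$. By Algorithms~\ref{algo:rank_q} and~\ref{algo:succ_q}, $Z$ is read only at the root child $u_2$ on $path(i)$, either when $u_2.A[c]=0$ (immediate redirection) or as the initial candidate $p_k$ for successor divergence. During backward search the queried positions are $sp_l$ (via $headrank$), $ep_l$ (via $rank$), and $sp_g$ (via $msucc$), always with $(sp_l,ep_l)\in\mathcal{S}$ and query symbol $c=P[l-1]$ such that $P[l-1..m]$ occurs in $S$. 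Hence $c$ occurs in $L[sp_l..ep_l]$; this is the only fact about $P$ we use.

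The key case analysis is as follows. Fix a query at position $i\in\{sp_l,ep_l\}$ with $i\in[p..q]$, and suppose $Z[c]$ is needed. If $c$ occurs in $L[sp_l..ep_l]\cap L[p..q]$ in a way that makes the query resolvable inside $L^{u_2}$, then $Z$ is not consulted for correctness. We therefore consider the cases in which the occurrence of $c$ falls outside $[p..q]$. If $i=sp_l\in[p..q]$ and $ep_l>q$, then $(sp_l,ep_l)\in\mathcal{R}(p,q)$, so $ep_l\le z$ by the definition of $rmo$. An occurrence of $c$ to the right of $q$ inside the range thus lies in $L[q+1..z]$, giving $c\in\Sigma^{(q+1,z)}$. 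Symmetrically, if $i=ep_l\in[p..q]$ and $sp_l<p$, then $(sp_l,ep_l)\in\mathcal{L}(p,q)$, so $sp_l\ge a$ and the occurrence is in $L[a..p-1]$, giving $c\in\Sigma^{(a,p-1)}$. If the whole range lies in $[p..q]$, then $c$ occurs in $L^{u_2}$, so $u_2.A[c]=1$ and the query never needs $Z$ beyond candidates inside $u_2$. The only exception is the successor candidate when the successor from $sp_g$ lies beyond $q$, and that is again covered by the $\mathcal{R}$ case.

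The step we expect to be the main obstacle is the rank query at $ep_l$ when $c\notin\Sigma^{u_2}$. Algorithm~\ref{algo:rank_q} then jumps right via $Z[c]$ to read a prefix count, but the witnessing occurrence from the range may lie to the left, in $L[a..p-1]$. We must check that the answer is still correct when $Z[c]$ is retained, which holds because $c\in\Sigma^{(a,p-1)}$. We also need to handle the dual situation: a dropped entry whose absence yields $0$ or null. We will argue that any symbol whose $Z$ entry is dropped cannot appear in a query at a position of $L^{u_2}$ for any valid range, so the erroneous output never arises for occurring $P$. Finally, we will verify the $headrank$ adjustment ($sp_l$ versus $sp_l-1$) and that the toehold query $msucc_{P[g-1]}(L,sp_g)$ targets a successor within $[sp_g..ep_g]$, so it falls under the $\mathcal{R}$ case. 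Together these show that every $(sp_l,ep_l)$ and the toehold are computed correctly.
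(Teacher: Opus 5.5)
Your proposal is correct and takes essentially the paper's route: a query at $sp_l\in[p..q]$ whose range extends past $q$ lies in $\mathcal{R}(p,q)$, which places the forced occurrence of $c$ in $L[q+1..z]$, and symmetrically a query at $ep_l\in[p..q]$ whose range starts before $p$ lies in $\mathcal{L}(p,q)$, which places it in $L[a..p-1]$; you are more explicit than the paper about the in-range subcase and about $u_2.Z[c]$ serving as the initial successor candidate $p_k$ for $msucc$ at $sp_g$. One sentence needs repair: your \emph{dual} claim that a dropped symbol ``cannot appear in a query at a position of $L^{u_2}$'' is false as stated, because a symbol occurring in $L^{u_2}$ but in neither $L[a..p-1]$ nor $L[q+1..z]$ is dropped, yet valid rank queries on it at positions of $L^{u_2}$ do occur and are answered without $Z$; what you actually need, and what your case analysis already gives by contraposition, is that no valid query that reads and relies on $u_2.Z[c]$ has $c$ dropped.
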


\begin{proof}
Let $l \in \{m+1, m,\ldots, 2\}$ be a step in the operation $backwardsearch(P)$, and let $\sa[sp_{l}..ep_{l}]$ be the range of suffixes in $S$ prefixed by $P[l..m]$. We say that the triple $(sp_l, ep_l, P[l-1])$ is \emph{valid} if $P[l-1]$ occurs in $L[sp_l..ep_l]$, which is equivalent to $P[l-1..m]$ occurring in $S$.

To prove the lemma, it suffices to show that whenever a rank or successor operation in $backwardsearch(P)$ process a valid index $i \in [p..q]$ and a valid symbol $c$, the sampled array $u_2.Z$ contains an entry for $c$. Indeed, a valid triple $(sp_l, ep_l, P[l-1])$ produces a range $(sp_{l-1}, ep_{l-1})$ for a suffix $P[l-1..m]$ that exists in $S$.%; an incorrect backward-search step would therefore lead to an incorrect final result when $P$ occurs in $S$.  

We focus on $rank_c(L, i)$ and $headrank_c(L, i)$; the argument for the successor queries is analogous. 

For an index $i \in [p, q]$, the query $rank_c(L, i)$ traverses $path(i)=u_1,u_2,\ldots, u_v$ and may diverge if $path(pred_c(L, i))\neq path(i)$. Furthermore, when $L^{u_2}=L[p..q]$ does not have occurrences of $c$, the algorithm falls into a special root-child case and redirects the traversal from $u_2$ to the right-hand sibling $y_1$ containing $L[succ_c(L, i)]$. The pointer to $y_1$ is stored in $u_2.Z[c]$ (Section~\ref{sec:rank_q}).

Two cases must be considered when sampling $u_2.Z$.

Case one:
\begin{equation*}
    i=sp_l \text{ and } p \leq sp_l \leq q.
\end{equation*}

In the VLB-tree, $backwardsearch(P)$ uses $headrank_{P[l-1]}(L, sp_l)$ to compute $sp_{l-1}$. It may happen that $P[l-1]$ does not occur in $L[p..q]$ but does occur in $L[q+1..ep_l]$, forcing the use of $u_2.Z[P[l-1]]$. Since $(sp_{l}, ep_l, P[l-1])$ is valid, the entry $u_2.Z[P[l-1]]$ must exist in the sampling of $u_2.Z$, otherwise $backwardsearch(P)$ fails with a pattern that exists in $S$. The sampling of $P[l-1] \in \Sigma^{(q+1,z)}$ is guaranteed because $(sp_l, ep_l) \in \mathcal{R}$ overlaps the right end of $(p,q)$, and $(y,z) = rmo(p, q)$ contains all elements of $\mathcal{R}$. 

Case two:

\begin{equation*}
    i=ep_l \text{ and } p \leq ep_l \leq q
\end{equation*}

In this case, $backwardsearch(P)$ uses $rank_{P[l-1]}(L, ep_l)$ to compute $ep_{l-1}$. The symbol $P[l-1]$ may occur in $L[s_l..p-1]$ but not in $L[p..q]$, again forcing the use of $u_2.Z[P[l-1]]$. This time, the existence of $P[l-1]$ in the sample of $u_2.Z$ is guaranteed because $(sp_l, ep_l) \in \mathcal{L}$ overlaps the left end of $L[p..q]$, and $(a,b)=lmo(p,q)$ contains all elements of $\mathcal{L}$.
 
In conclusion, a sample of $u_2.Z$ that stores information for $\Sigma^{(a,b)} \cup \Sigma^{(y,z)}$ is sufficient to ensure that $backwardsearch(P)$ is answered correctly whenever $P$ exists in $S$. Figure~\ref{fig:samp} and Example~\ref{exp:samp} illustrates both cases and the corresponding overlaps within $L[p..q]$.
\end{proof}

\begin{figure*}[t]
\centering
%\tikzsetnextfilename{root-children-samp}
\includegraphics[width=\textwidth]{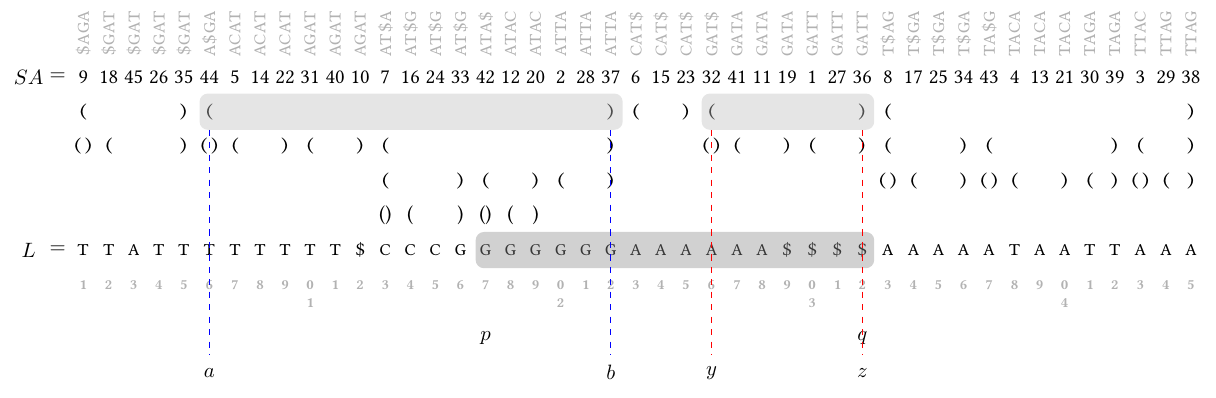}
\caption{Sampling of $Z$ arrays. The parentheses are the ranges $\sa[sp_l..ep_l]$ that $backwardsearch(P)$ can visit for patterns of length $\leq 4$. The gray block $L[p,q]=L[17..32]$ marks the fragment of a root child (the one in the middle in Figure~\ref{fig:simp_vlbt_enc}). The gray parentheses to the left highlight the longest left overlap $(a,b)=lmo(p,q)$, and the gray parentheses to the right highlight the longest right overlap $(y,z)=rmo(p,q)$. The alphabet $\Sigma^{(a,p-1)}$ is $\{\texttt{\$}, \texttt{C}, \texttt{G}, \texttt{T}\}$ and the alphabet $\Sigma^{(q+1,z)}$ is $\emptyset$. The $Z$ array for the root child encoding $L[p..q]$ thus stores pointers for $\{\texttt{\$}, \texttt{C}, \texttt{G}, \texttt{T}\}$.}
\label{fig:samp}
\Description{Sampling strategy}
\end{figure*}

\begin{example}\label{exp:samp}
Performing rank with the sampling of Figure~\ref{fig:samp}. Consider $rank_c(L, ep_l)$ for $c \in \{\texttt{\$},\texttt{C},\texttt{T}\}$, where $ep_l=b$. Since $c$ does not occur in $L[p..b]$ (and $p \le b \le q$), the VLB-tree uses the $Z$ array of the root child encoding $L[p..q]$. By Lemma~\ref{lem:samp}, $c$ has a pointer in $Z$, so the query returns a valid answer. Note that $rank_{\texttt{A}}(L,b)$ returns $0$ although the correct value is $1$ (due to $L[3]=\texttt{A}$). This inconsistency is harmless because $\texttt{A}$ does not occur in $L[sp_l=a..ep_l=b]$, so the corresponding backward-search step correctly yields an empty range ($sp_{l-1}>ep_{l-1}$).
\end{example}

\section{Locating from toehold}\label{sec:phi}

In this section, we introduce a new representation for the function $\phi^{-1}$ used by $locate(P)=\sa[sp_1..ep_1]$ to decode the suffix array values in $\sa[sp_1+1..ep_1]$ from the toehold $\sa[sp_1]$ (Section~\ref{sec:r_index}). This new encoding balances speed and space using adaptive techniques based on the VLB-tree. 

The classical data structure for $\phi^{-1}$ uses a bitvector $B[1..n]$ marking text positions that correspond to run tails in the BWT, together with an array $D[1..r]$ storing suffix array differences at run boundaries. The operation $\phi^{-1}(\sa[j])=\sa[j+1]$ is then solved using Equation~\ref{eq:phi}.

This formulation reveals two opportunities for improvement: 

\begin{itemize}
    \item The distribution of $1$s in $B$ is typically skewed, so adaptive structures improve space usage. 
    \item Interleaving $B, D$ improves spatial locality when answering $\phi^{-1}(\sa[j])$.
\end{itemize}

We combine these observations by introducing a VLB-tree for $\phi^{-1}$, denoted $\mathcal{T}_{\phi}$.

\subsection{The VLB-tree for locating}

The tree $\mathcal{T}_\phi$ is built over a logical sequence $Q[1..n]$ that interleaves the information of $B$ and $D$. Let $b_1,b_2,\ldots,b_r$ be the positions in $B$ such that each $B[b_a]=1$, and let $d_a=b_{a+1}-b_{a}$ denote the distance between consecutive $1$s (with $d_r=n+1 - b_{r}$). We define the run-length encoding of $Q$ as

\begin{equation*}
rle(Q) = (D[1],d_1)(D[2],d_2),\ldots, (D[r], d_r),
\end{equation*}

where each pair $(D[a], d_a)$ forms a run, analogous to those in the BWT. The entry in $D[a]$ is the run symbol and $d_a$ the run length.

We construct $\mathcal{T}_{\phi}$ from $Q$ following the same principles as in Section~\ref{sec:basic_end}, but storing only the information required to answer the operation $access(Q, \sa[j])$. Specifically, each internal node $u$ stores only the bitvector $u.X$ marking the superblocks and the array $u.P$ of the child pointers, while each leaf only stores the run-length sequence $E$ corresponding to a substring of $Q$.

Supporting $access(Q, \sa[j])=D[rank_1(B, \sa[j])]$ is sufficient to answer

\begin{equation*}
\phi^{-1}(\sa[j])=\sa[j]+access(Q,\sa[j]).
\end{equation*}

In the description of $\phi^{-1}(\sa[j])$ in Section~\ref{sec:r_index}, positions $\sa[j]$ that fall within $B[b_{a}..b_{a+1}-1]$ access $D[a]$. This range is represented in $\mathcal{T}_{\phi}$ by the run $(D[a],d_a)$, thereby improving spatial locality compared to traditional $r$-index implementations. 

\paragraph{Handling run splitting} The construction of $\mathcal{T}_{\phi}$ may divide a run $(D[a], d_a)$ into several parts $(D[a], d_{a.1}),(D[a], d_{a.2}), \ldots$ depending on the block size $\ell$ and the run length $d_a$. However, this partition does not affect the output of $\phi^{-1}$ because all positions $B[b_{a.2}], B[b_{a.3}],\ldots$ are associated with $D[a]$. This property follows directly the definition of the BWT (see Section~\ref{sec:r_index}).

Figure~\ref{fig:vlbtree-phi} illustrates the construction of $\mathcal{T}_{\phi}$.

\subsubsection{Complexity and adaptivity}

The traversal of $path(\sa[j])$ in $\mathcal{T}_{\phi}$ allows us to find the run $(D[a], d_a)$ containing position $\sa[j]$, which we then use to answer $access(Q, \sa[j])$. This operation takes $O(\log_f (\ell/w) + w)$ time, $\ell$, $w$, and $f$ being the VLB-tree parameters.

Areas of $Q$ containing long runs (respectively, sparse regions of $B$) generate leaves in $\mathcal{T}_{\phi}$ that are close to the root, resulting in fewer cache misses and short leaf scans. Conversely, regions with many short runs (equivalently, dense regions of $B$) are placed deeper in $\mathcal{T}_{\phi}$, increasing space usage, but matching the inherently higher cost of navigating many runs. This adaptive behavior mirrors that of the VLB-tree used for the BWT and achieves a balanced trade-off between space and query performance.

\section{Fast subsampled index}\label{sec:sr-vlbtree}

The VLB-trees for the BWT and for $\phi^{-1}$ together form a compressed suffix array supporting the queries $count(P)$ and $locate(P)$. However, storing the $2r$ suffix array samples required by these structures can be expensive for repetitive texts with many small variations.

The $sr$-index solves the space issue, but existing implementations suffer from poor cache locality because queries must traverse multiple disconnected structures (Section~\ref{sec:sr_index}). In this section, we show how to encode both the $sr$-index and its fast variant (Section~\ref{sec:sr-index-fast}) within the VLB-tree framework, improving locality while preserving their space.

\subsection{Modifications on the VLB-tree of the BWT}

Let $\mathcal{T}_L$ denote the VLB-tree of the BWT. Each leaf $v$ in $\mathcal{T}_{L}$ stores a run-length encoded fragment $v.E[1..m']$ of $L$ and an array $v.W[1..m']$ with the suffix array samples in $\sa_h$ associated with $v.E$ (Section~\ref{sec:vlbt-sa-samp}). After subsampling, $v.W$ shrinks if some of the runs in $v.E$ do not retain a sample.

To represent this situation, we augment each leaf with a bitvector $v.R[1..m']$, where $v.R[z]=0$ indicates that $v.E[z]$ retains its sample and $v.R[z]=1$ otherwise. For a run $v.E[z]$ with $v.R[z]=0$, its sample is stored at $v.W[rank_0(v.R,z)]$.

Since $m' \le m$ and $v.R$ fits in a machine word, $rank_0$ can be computed in constant time using bitwise operations.

\paragraph{Backward search} The only query in $\mathcal{T}_L$ that requires a modification is $backwardsearch(P)=(sp_1, ep_1, \sa[sp_1])$. The run $\rho$ in $L$ used to derive the toehold $\sa_h[sp_1]$ (Equations~\ref{eq:toehold}) may not retain a subsample, in which case we need to recover it (Equation~\ref{eq:sr_index_th}). We operate as follows: after completing the backward search and obtaining $(g, sp_g)$, we use $msucc_{P[g-1]}(L, sp_g)$ to access the VLB-tree leaf $v$ encoding the run $L[h_{\rho}..t_{\rho}]$ associated with $\sa_h[\rho]$. Let $v.E[z]$ be run for $L[h_{\rho}..t_{\rho}]$. If $v.R[z]=0$, we return $v.W[rank_0(v.R', z)]=\sa_h[\rho]$, otherwise we recover $\sa_h[\rho]$ via $LF$ starting from $h_{\rho}$, as in Equation~\ref{eq:sr_index_th}. 

\subsection{Modifications on the VLB-tree of locating}

In the standard $sr$-index, the query $\phi^{-1}(\sa[j])$ performs up to $s$ $LF$ steps to check if $\sa[j]$ falls in valid area. A failed test indicates invalid area but returns $\sa[j+1]$ altogether, whereas a successful test does not give the answer but indicates that Equation~\ref{eq:phi} with subsampled $B$ and $D$ returns the correct $\sa[j+1]$ (Section~\ref{sec:sr_index}).

This functionality does not require any modification to the VLB-tree $\mathcal{T}_{\phi}$. We can build it directly over the logical sequence $Q$ derived from the subsampled $B$ and $D$.

On the other hand, the fast variant of the $sr$-index marks the invalid areas to save computations (Section~\ref{sec:sr-index-fast}) and therefore we must incorporate this information into the VLB-tree $\mathcal{T}_{\phi}$.

A run $(D[a], d_a)$ in $Q$ represents a range $B[b_a..b_{a-1}-1]$ that is fully or partially valid. The run $(D[a], d_a)$ of a partially valid range can be split into two halves $d_{a,l}, d_{a,r}$. Let $b_a < b' < b_{a+1}$ be the leftmost cleared position within $B[b_a..b_{a+1}-1]$ during subsampling. The left half $d_{a,l}$ refers to the valid prefix $B[b_a..b'-1]$ while $d_{a,r}$ refers to the invalid suffix $B[b'..b_{a+1}-1]$. In the VLB-tree of $\mathcal{T}_{\phi}$, a query $access(Q, \sa[j])$ such that $\sa[j]$ falls within $d_{a,l}$ must return $D[a]$, while a query that falls in $d_{a,r}$ must return null to indicate an invalid area. A query within a fully valid area always returns the corresponding difference value in $D$.

\begin{figure}[t]
\centering
%\tikzsetnextfilename{vlb-tree-phi}
\includegraphics[width=0.95\linewidth]{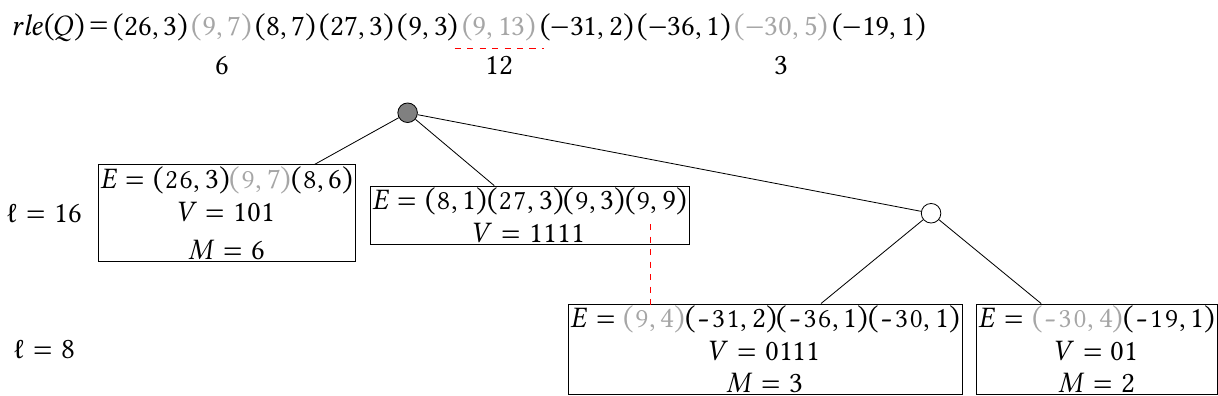}
\caption{VLB-tree $\mathcal{T}_{\phi}$ of the sequence $Q$ generated from subsampled $B$ and $D$ of Figure~\ref{fig:sri}. The parameters are $\ell=16,f=2$ and $w=4$. Gray elements in $rle(Q)$ are partially valid areas. The number below is the length of the prefix that remains valid (gray regions in $B$ of Figure~\ref{fig:sri}). Bitvector $V$ and array $M$ are elements of the fast $sr$-index (Section~\ref{sec:sr-vlbtree}). Dashed lines correspond to Example~\ref{ex:run_split}.} 
\label{fig:vlbtree-phi}
\Description{VLB-tree for $\phi^{-1}$}
\end{figure}

We incorporate this information into $\mathcal{T}_{\phi}$ using analogous structures to those of the fast $sr$-index, augmenting each leaf with:

\begin{itemize}
    \item A bitvector $v.V[1..m']$, where $v.V[z]$ indicates the partial or full validity of the run $v.E[z]$, and
    \item An array $v.M[1..rank_0(v.V, m')]$ that stores in $v.M[rank_0(v.V, z)]$ the valid prefix $d_l$ of a partially valid run $v.E[z]$.
\end{itemize}

\paragraph{Handling run splitting} The only relevant consideration when building $\mathcal{T}_{\phi}$ is the update of $d_{a,l}$ when a partially valid run $(D[a], d_a)$ is split into multiple VLB-tree leaves. Suppose the construction algorithm partitions $(D[a], d_a)$ into $p$ fragments $(D[a], d_{a.1}), (D[a], d_{a.2}), \ldots (D[a], d_{a.p})$. If the fragment $(D[a], d_{a.j})$ is contained within the prefix $d_{a,l}$ of $(D[a], d_a)$, then $(D[a], d_{a.j})$ becomes fully valid. On the other hand, if $(D[a], d_{a.j})$ is contained within the prefix $d_{a,r}$ of $(D[a], d_a)$, the fragment becomes fully invalid. Finally, when $(D[a], d_{a.j})$ falls within a substring that overlaps $d_l$ and $d_r$, it remains partially valid, but its valid prefix must be updated.

Figure~\ref{fig:vlbtree-phi} shows examples of $v.V$ and $v.M$, and Example~\ref{ex:run_split} illustrates the splitting of runs.

\begin{example}\label{ex:run_split}
     In Figure~\ref{fig:vlbtree-phi}, the run $rle(Q)[6]=(\texttt{9}, 13)$ (red dashed underline) corresponds to $Q[23..35]$ (equivalently, $B[23..35]$), with associated difference value $\texttt{9}$ in $D$. The run is partially valid, with valid prefix length $12$. During the construction of $\mathcal{T}_{\phi}$, the run is divided into two parts, $(\texttt{9}, 9)$ and $(\texttt{9},4)$ (vertical dashed red line). The fragment $(\texttt{9},9)$ becomes fully valid because it lies within the valid prefix, while $(\texttt{9}, 4)$ remains partially valid and its valid prefix becomes $12-9=3$. Accessing $Q[23..34]$ in $\mathcal{T}_{\phi}$ returns $\texttt{9}$, whereas accessing $Q[35]$ (the invalid suffix) returns a null value.
\end{example}

\section{Experiments}

We implemented our data structures in a \texttt{C++} library called VLBT (\url{https://github.com/ddiazdom/VLBT}). The VLB-tree representation of the BWT (Section~\ref{sec:aug_enc}) is denoted \vlbtbwt and supports $count$ queries. The compressed suffix array that combines the BWT and $\phi^{-1}$ VLB-trees with fast subsampling (Section~\ref{sec:sr-vlbtree}) is denoted \vlbtsri and supports both $count$ and $locate$ queries. Both implementations rely on SIMD instructions to accelerate scans over run-length encoded sequences.

\subsection{Experimental setup}

We evaluate the tradeoffs between query speed, cache efficiency, and index size for BWT-based data structures supporting $count$ and $locate$ queries.

We fixed the VLBT parameters $w=64$ and $f=4$ to implement the operations $rank_1, select_1$ and $pred_1$ using bitwise operations on 64-bit machine words. Then, we built multiple variants of \vlbtbwt and \vlbtsri by varying the block size $\ell$ from $4^{6}=4{,}096$ to $4^{9}=262{,}144$. In all \vlbtsri instances, the VLB-trees of $L$ and $\phi^{-1}$ used the same block size $\ell$, and for each block size, we varied the subsampling rate $s$ across values $4, 8, 16, 32$. A variant named \vlbtsri-\texttt{x} corresponds to $\ell=4^{x}$, where $x \in \{6,7,8,9\}$. 

\subsubsection{Competitor tools and measurements}\label{sec:tools}

We compared our VLBT variants against the following implementations:

\begin{itemize}
    \item \mn\footnote{\url{https://github.com/simongog/sdsl-lite/blob/master/include/sdsl/wt_rlmn.hpp} (release 2.1.1)}: the run-length BWT of M\"akinen and Navarro~\cite{makinen2005succinct} as implemented in~\cite{gog2014theory}.
    \item \gkk\footnote{\url{https://github.com/dominikkempa/faster-minuter} (commit \texttt{9238178})} : BWT encoding using fixed block boosting~\cite{gog2019fixed}.
    \item \movc and \movl\footnote{\url{https://github.com/LukasNalbach/Move-r} (commit \texttt{bed2fe9})}: optimized implementations~\cite{bertram2024move} of the move structure~\cite{move21}. The variant \movc encodes only the BWT, while \movl also includes $r$-suffix array samples.
    \item \ri\footnote{\url{https://github.com/nicolaprezza/r-index} (commit \texttt{7009b53})}: the original $r$-index~\cite{g2018op} .
    \item \sriva\footnote{\url{https://github.com/duscob/sr-index} (commit \texttt{f99b54a})}: the original $sr$-index with valid $\phi^{-1}$ areas ~\cite{srindex2025} (i.e., fast variant). We varied the sampling $s$ across values $8,12,16,20$.
\end{itemize}

We created a standard command line interface to measure the speed performance of $count$ and $locate$ queries in run-length BWTs and $r$-index implementations. We excluded \movc and \movl because its codebase is substantially more involved, and used their own interface. All interfaces receive as input an index and a set of patterns, and report the average speed for $count$ and/or $locate$. The source code for these interfaces can be found in the folder \texttt{test\_suite} in the VLBT repository.

We split the experiments according to the supported query types. For structures that support only $count$, we evaluated \vlbtbwt with \mn, \gkk, and \movc. For structures that support $count$ and $locate$, we evaluated \vlbtsri against \movl, \ri, and \sriva.

\paragraph{Comparisons for \vlbtbwt} For $count$ queries, we identified the fastest and the slowest \vlbtbwt variants and compared their query times against the other methods, reporting the resulting time range. We did the analogous comparison for space by taking the smallest and the largest \vlbtbwt variants. Section~\ref{sec:exp_count_rlbwt} summarizes these ranges.

\paragraph{Comparisons of \vlbtsri} Here, performance depends on two parameters: block size and subsampling. For each fixed block size in \sriva, we averaged query time (and space) over the tested subsampling values, and then reported the best--worst range across block sizes when comparing against other tools. Since \sriva also varies sampling, we likewise averaged its query time and space over its sampling values. When comparing \vlbtsri and \sriva directly, we used averages computed over the same subsampling values. Section~\ref{sec:exp_loc_csa} presents the results. 

The tradeoffs of query-time versus space usage of all experiments are reported in Figure~\ref{fig:count_loc_time_exp}. 

To assess cache efficiency, we measured the average number of L1 data cache misses per query symbol. For $count$ queries, we compared \vlbtbwt, \mn, and \gkk; for $count$ and $locate$ queries, we compared \vlbtsri, \ri, and \sriva. For each tool, we generated an executable that loads the index, executes queries, and records cache misses. This required special compilation steps described in Section~\ref{sec:comp}. Due to compilation constraints, we could not include \movc and \movl in the cache experiments. Results are shown in Figure~\ref{fig:cou_loc_cmiss_exp}.
 
\begin{table}[t]
    \centering
    \begin{tabular}{lcccc}
    \toprule
        Dataset & Size (GB) & Alphabet & $n/r$  & Longest run (MB)\\
    \midrule
       \bac     & 133.12 & 7  & 116.77 & 0.23 \\
       \covid   & 267.41 & 17 & 940.49 & 4.11 \\
       \hum     & 241.24 & 7  & 61.82  & 6.66 \\
       \kernel  & 54.45  & 190& 263.11 & 70.6 \\
    \bottomrule
    \end{tabular}
    \caption{Datasets.}
    \label{tab:datasets}
\end{table}

\subsubsection{Datasets, queries, and indexing}

We built all indexes on four datasets with varying degrees of repetitiveness. Table~\ref{tab:datasets} summarizes the main characteristics of the collections.

\begin{itemize}
    \item \bac: genome assemblies of 30 bacterial species from the AllTheBacteria collection~\cite{atb2024}. Strings belonging to the same species are highly repetitive, while strings from different species are dissimilar.
    \item \covid: $4{,}494{,}508$ SARS-CoV-2 genomes downloaded from the NCBI genome portal \footnote{\url{https://uud.ncbi.nlm.nih.gov/home/genomes}}. These sequences are short and near-identical, with an average length of $29{,}748$.
    \item \hum: genome assemblies of 40 individuals from the Human Pangenome Reference Consortium (HPRC)~\cite{hprc}. Individual genomes are near-identical, but assembly differences introduced variability. 
    \item \kernel: $2{,}609{,}417$ versions of the Linux kernel repository\footnote{\url{https://github.com/torvalds/linux}}. This dataset is highly repetitive and has a large alphabet. 
\end{itemize}

In DNA collections (\bac,\covid, and \hum), we also considered the DNA reverse complement of each string (as is standard in bioinformatics). The numbers presented in Table~\ref{tab:datasets} already consider these extra sequences. 

We sampled $50{,}000$ random patterns of length $105$ from each dataset for the $count$ experiments. For $locate$ queries, we restricted the evaluation to patterns with fewer than $50{,}000$ occurrences in the dataset to limit computational cost. For $count$ queries, we report the average time (in $\mu$secs) per pattern, computed as the total time over all queries divided by the number of queries. For $locate$, we report the average time (also in $\mu$secs) per reported occurrence, computed as the total time divided by the number of reported occurrences returned by the queries. 

We did not measure index construction time, as the compared tools rely on different building strategies and optimizations, making a fair comparison difficult and beyond the scope of this work. To ensure consistent query results, all indexes were built using BWTs and suffix array samples generated by the same tool, \texttt{bigBWT}~\cite{bigbwt}.

\subsubsection{Machine and compilation}\label{sec:comp}

All experiments were conducted on a machine running SUSE Linux Enterprise Server 15, equipped with 1~TB of RAM and an AMD EPYC 7713 processor at 2~GHz with 128 cores. No other relevant processes were running at the time we performed our measurements.

The tools were compiled using \texttt{gcc 13.3} with optimization flag \texttt{-O3}. Our VLBT implementations use SIMD instructions from the SSE4.2 instruction set, enabled via the \texttt{-msse4.2} compilation flag.

We measured cache misses using the Cray Programming Environment (CPE), which provides access to hardware performance counters. We compiled the executables for the cache experiments (one per competitor tool) using \texttt{Cray clang 18.0.1} and set the environment variable \texttt{PAT\_RT\_PERFCTR=PAPI\_L1\_DCM} to record L1 data cache misses during query execution. Finally, we instrumented each executable  using \texttt{pat\_build}, producing executables with the \texttt{+pat} suffix that were used to collect the measurements.

Each executable receives as input a data structure and a set of patterns. First, it warms up the data structure by executing a sequence of random queries. This step ensures that no page faults occur when performing real cache measurements. The executable then pollutes the L1 data cache with random bits to force a cold-cache state. Finally, it executes the queries in random order and records the number of misses.

\begin{figure*}[t]
    \centering
    %\tikzsetnextfilename{count_locate_plots_time}
    \includegraphics[width=\linewidth]{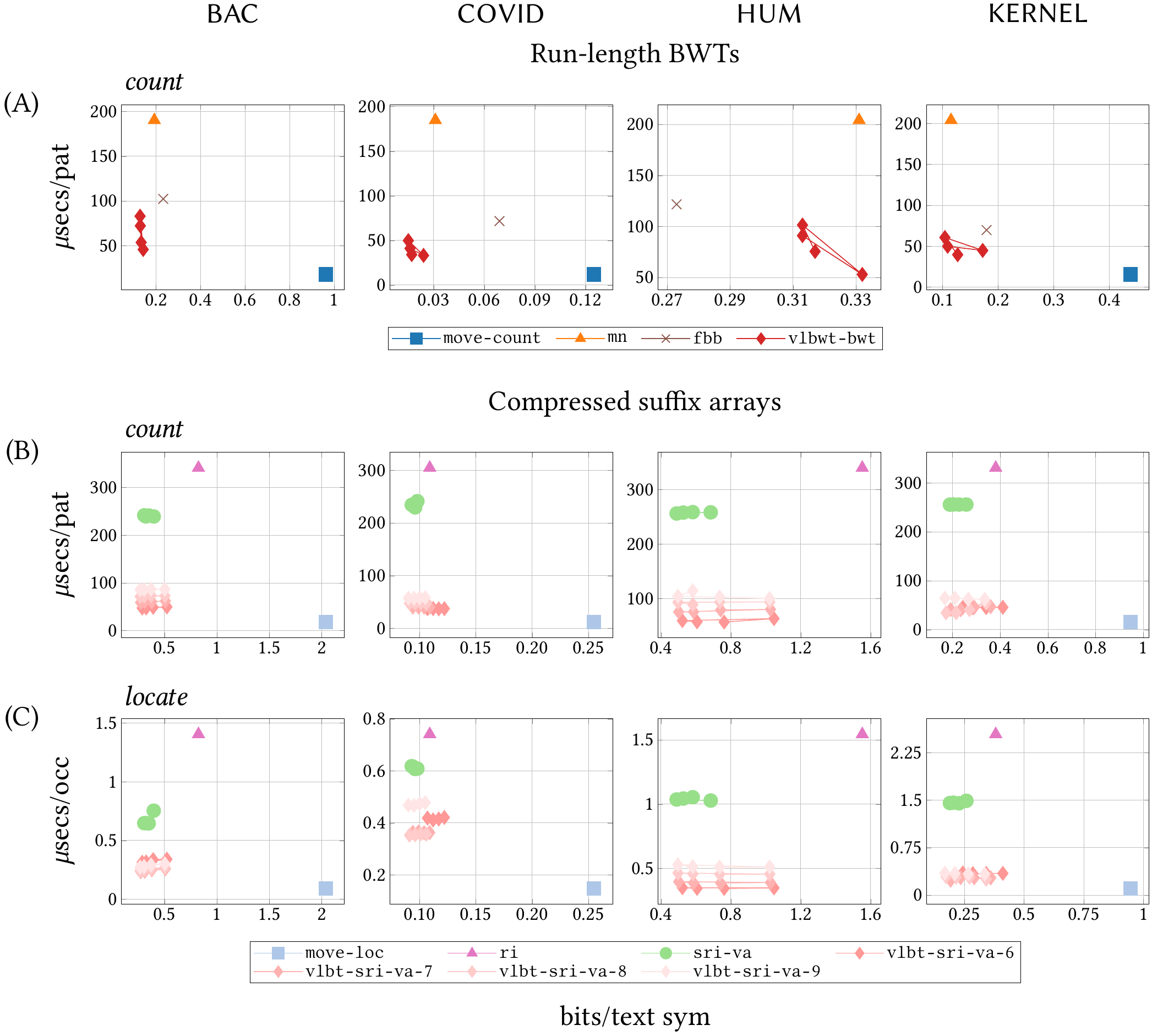}
    \caption{Query speed/space tradeoffs. The points in each \texttt{vlbt-bwt} are different block sizes $\ell$, and the points in each \texttt{vlbt-sri-va-x} are subsampling values $s$.}
    \Description{Query speed/space tradeoffs}
    \label{fig:count_loc_time_exp}
\end{figure*}

\subsection{Query speed/space tradeoffs in count queries}\label{sec:exp_count_rlbwt}

The VLBT variants supporting $count$ operations improve query time over state-of-the-art run-length BWTs and compressed suffix arrays, while remaining competitive in space. Figure~\ref{fig:count_loc_time_exp} summarizes the resulting space--time tradeoffs: panel~(A) compares BWT-only structures, and panel~(B) compares full compressed suffix arrays. Across datasets, the main outlier in speed is the move data structure, which is faster but substantially larger.

\subsubsection{Run-length BWTs}

The \vlbtbwt instances are between $1.15$ and $2.3$ times faster than \gkk, while typically being between $1.06$ and $4.6$ times smaller. The size gap is particularly pronounced on \covid, where \vlbtbwt is $2.88\text{--}4.60$ times smaller. This is consistent with the VLB-tree adapting its block boundaries to local run density, whereas \gkk uses fixed-size blocks.

On the other hand, \gkk achieves slightly better space usage on \hum, being $12.77\%\text{--}17.77\%$ smaller than \vlbtbwt. This indicates that, for this dataset, fixed-block boosting can compress some locally dense regions slightly more effectively than our run-length encoding within leaves.

When compared to \mn, \vlbtbwt outperforms it in both space and time, achieving speedups of $2.01\text{--}5.55$ while reducing space by up to a factor of $2.07$. 

Among the competing approaches, \movc offers a clear speed--space tradeoff: it is $2.56\text{--}4.54$ times faster than \vlbtbwt but requires $2.55\text{--}8.33$ times more space. Conversely, \vlbtbwt targets the opposite regime, prioritizing compactness while still improving over \mn and \gkk in query time.

\subsubsection{Compressed suffix arrays}

For $count$ queries on compressed suffix arrays, \vlbtsri consistently improves over both \ri and \sriva in query time. In terms of space, \vlbtsri and \sriva are broadly comparable when using equal subsampling settings, while \movl remains the fastest option at the expense of substantially larger space consumption (Figure~\ref{fig:count_loc_time_exp}B).

Specifically, \movl is typically $2.63\text{--}4.76$ times faster than \vlbtsri, but also $2.23\text{--}5.63$ times larger. The gap with \ri is even more pronounced: \ri is $3.22\text{--}8.14$ times slower and up to $2.28$ times larger than \vlbtsri. 

Finally, when comparing $sr$-index variants across subsampling values $s$, \sriva is $2.37\text{--}6.81$ times slower, while using comparable space. Depending on the chosen block size $\ell$ for \vlbtsri, \sriva can be almost equal in size or slightly larger, but the differences are modest compared to the consistent query-time gap. Overall, these results match our design goal: we preserve the components of the $sr$-index while reorganizing them to improve locality and reduce per-step overhead.

\subsection{Query speed/space tradeoffs in locate queries}\label{sec:exp_loc_csa}

The $locate$ results follow the same qualitative pattern across datasets (Figure~\ref{fig:count_loc_time_exp}C). Our \vlbtsri variants improve query time over \ri and \sriva while remaining close to the space usage of subsampled indexes. Compared to \movl, the results again highlight a clear space--time tradeoff: \movl is faster, but substantially larger.

Overall, our \vlbtsri variants are between $1.57$ and $9.64$ times faster than \ri, while requiring up to $2.28$ times less space. The smallest improvement is observed on the \covid dataset, where our method achieves approximately a twofold speedup while using comparable space. This behavior is expected given the high repetitiveness of this dataset. Conversely, the largest space difference ($2.28$ factor) occurred in \bac, where sequence redundancy occurs within the same bacterial species, but not across species. 

Compared to \sriva, the average speedups of our method range from $1.3$ to $5.38$, which is still considerable, but smaller than the gains in $count$ queries when comparing the same data structures ($2.37\text{--}6.36$). A possible explanation is that, although a single operation $\phi^{-1}(\sa[j])$ is more cache-friendly than its counterpart in $\sriva$, decoding consecutive values in $\sa[sp_1+1..ep_1]$ remains cache inefficient in both cases.

The \movl tool remains faster than our \vlbtsri variants, achieving speedups of $2.38$ to $3.57$ on average. However, this performance comes at a substantial space cost, as \movl requires between $2.23$ and $5.63$ times more space. This difference is particularly pronounced for the \bac dataset, where the space overhead reaches a factor of $5.34\text{--}5.63$. In contrast, the gap is smaller for \covid, with a factor of $2.23\text{--}2.62$. These results are consistent with the observation that BWT-based approaches scale poorly in the presence of edits, as in \bac. While subsampling mitigates space growth, our adaptive encoding effectively improves query speed without incurring comparable space overhead.

\begin{figure*}[t]
    \centering
    %\tikzsetnextfilename{count_locate_plots_cmiss}
    \includegraphics[width=\linewidth]{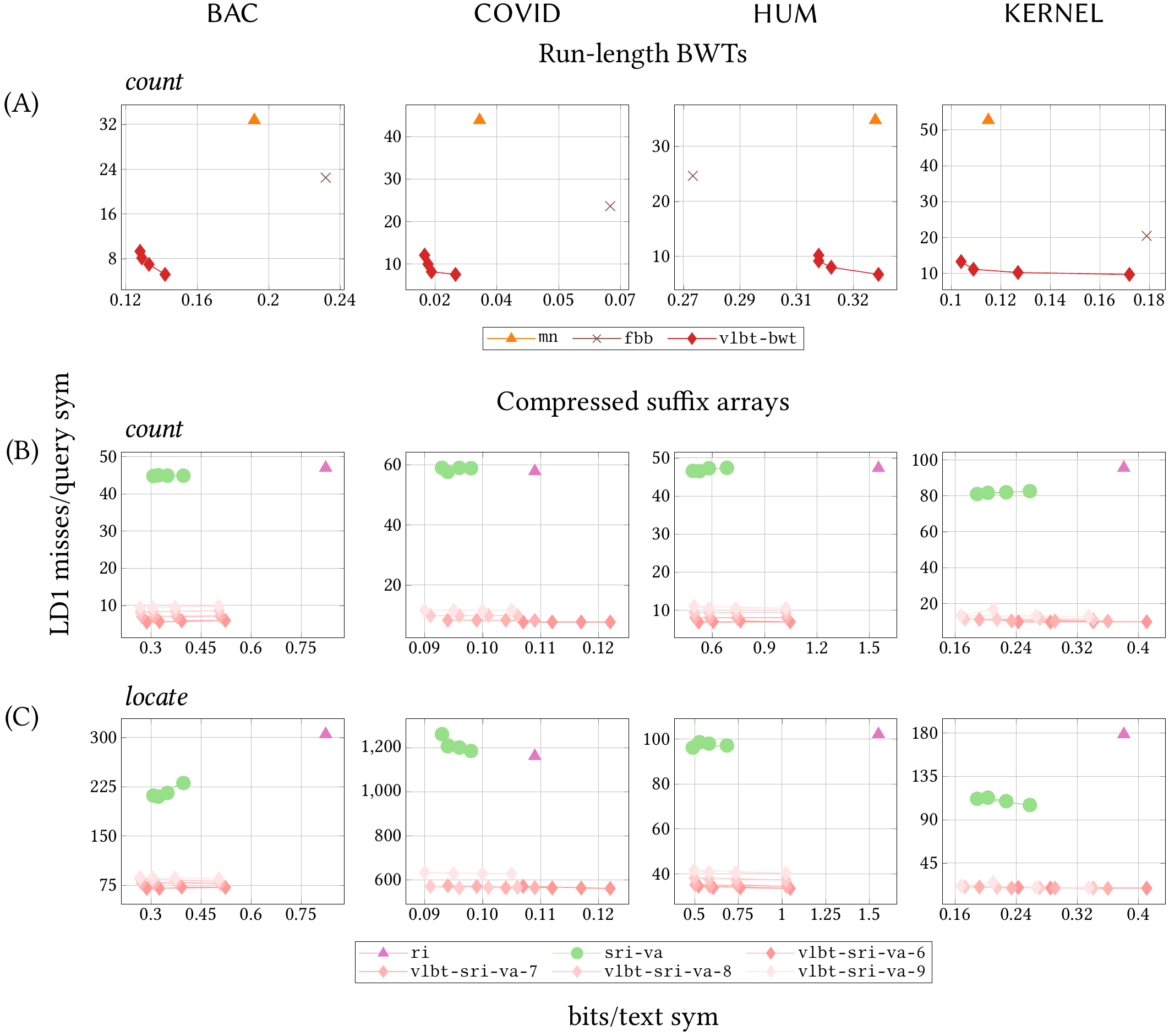}
    \caption{L1 data cache misses/space tradeoffs.}
    \Description{L1 data cache misses/space tradeoffs}
    \label{fig:cou_loc_cmiss_exp}
\end{figure*}

\subsection{Cache efficiency}

Cache-miss measurements support the main explanation for the speedups observed above (Figure~\ref{fig:cou_loc_cmiss_exp}). Across datasets, our VLBT variants incur fewer L1 data-cache misses per query symbol than \gkk, \mn, \ri, and \sriva, which is consistent with the corresponding reductions in query time (Figure~\ref{fig:count_loc_time_exp}). This indicates that memory locality is a primary driver of the observed performance differences.

For $count$ queries on run-length BWTs, \gkk incurs $1.54\text{--}4.35$ times more misses than \vlbtbwt, while \mn incurs $3.41\text{--}6.33$ times more misses. For compressed suffix arrays, the gap is larger: \ri and \sriva trigger $4.51\text{--}9.84$ and $4.48\text{--}8.46$ times more misses on average, respectively.

For $locate$ queries, \ri triggers $1.85\text{--}9.34$ times more cache misses than \vlbtsri, while \sriva incurs $1.93\text{--}5.7$ times more misses.

\begin{figure*}[t]
\centering
\begin{tikzpicture}
  \node (figA) {\includestandalone[width=0.95\linewidth]{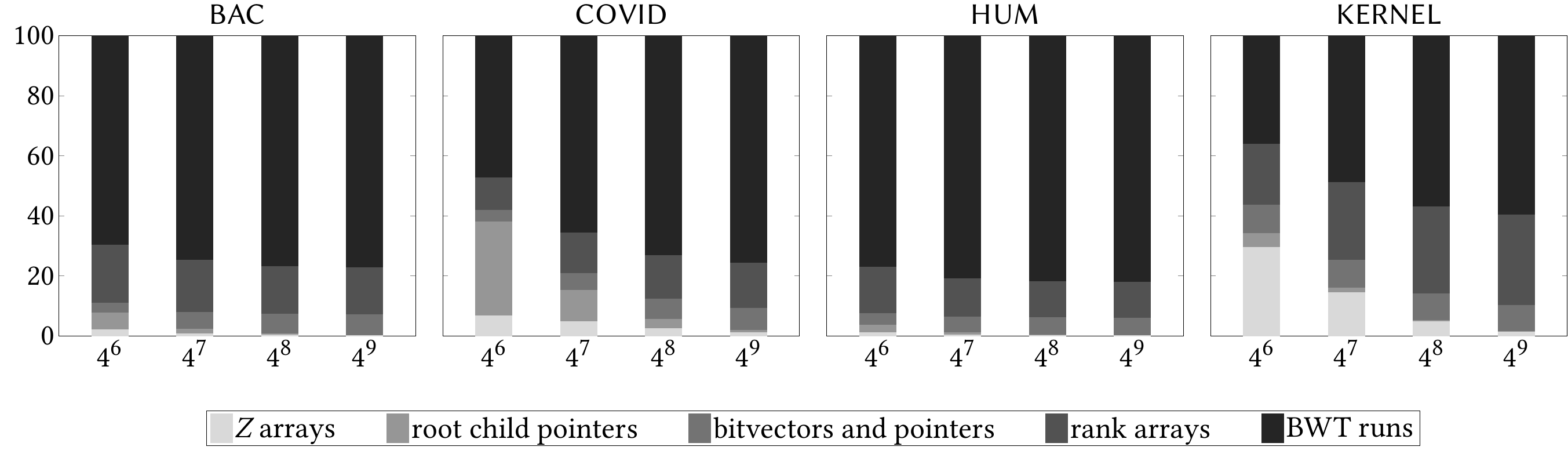}};
  \node[anchor=north west, xshift=-10pt, yshift=5pt] at (figA.north west) {\Large (A)};
\end{tikzpicture}
\vspace{1em}
\begin{tikzpicture}
  \node (figB) {\includestandalone[width=0.95\linewidth]{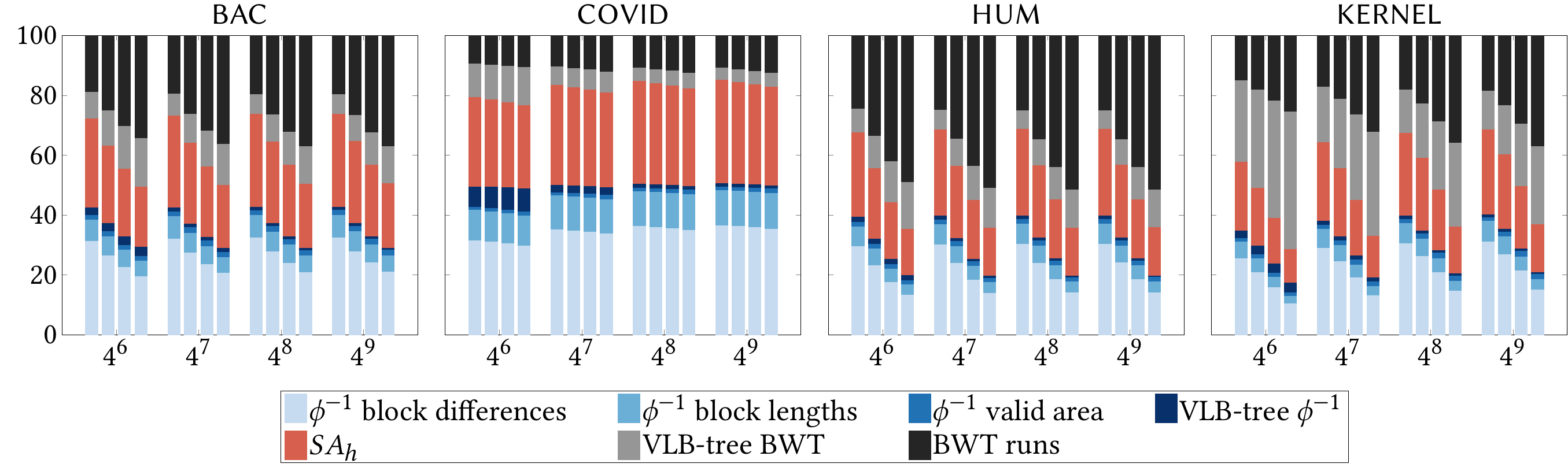}};
  \node[anchor=north west, xshift=-10pt, yshift=5pt] at (figB.north west) {\Large (B)};
\end{tikzpicture}
\caption{Space breakdown in (A) \vlbtbwt and (B) \vlbtsri. The x-axis is the block size, and different bars in the same block size represent different subsampling values.}
\label{fig:break_down}
\Description{Space breakdown for VLBT-based compresed suffix arrays}
\end{figure*}

\subsection{Effect of block size and subsampling}

In \vlbtbwt, increasing the block size from $\ell=4^6$ to $\ell=4^9$ reduces space usage by $5.72\%\text{--}39.53\%$, depending on the dataset. The largest reduction ($39.53\%$) occurs in \covid, which is consistent with larger alphabets increasing the fixed per-block overhead (e.g., alphabet-related metadata), so using fewer blocks yields larger savings. The smallest reduction ($5.72\%$) occurs in \hum, where the alphabet is small and the space is dominated by the encoded run structure rather than per-block metadata.

For \vlbtsri, changing $\ell$ has the same qualitative effect, and subsampling provides an additional (often larger) lever. The largest subsampling-driven reduction occurs in \hum: increasing $s$ from $4$ to $32$ reduces the index size by approximately $51\%$.

\subsection{Breakdown of space usage}

In \vlbtbwt, the encoded BWT runs account for the largest fraction of the total space ($36\%\text{--}82\%$). As $\ell$ increases, this fraction grows because per-block satellite data is amortized across fewer larger blocks. Alphabet-related metadata can also be visible in the breakdown: for \kernel, the root-level routing arrays $Z$ (Section~\ref{sec:aug_enc}) reach up to $29\%$ of the total space at $\ell=4^{6}$. This overhead is reduced by the sampling technique of Section~\ref{sec:samp} and by using larger block sizes (Figure~\ref{fig:break_down}A).

In \vlbtsri, suffix-array samples (stored across $\sa_h$ and the VLB-tree for $\phi^{-1}$) dominate the space budget, contributing between $21\%$ and $71.09\%$ of the total. Increasing the subsampling parameter $s$ reduces this component by $3.02\%\text{--}29.18\%$ when moving from $s=4$ to $s=32$ (smallest reduction in \covid with $\ell=4^{9}$; largest reduction in \hum with $\ell=4^{6}$). The VLB-tree overhead (BWT and $\phi^{-1}$ trees plus validity metadata) is typically a smaller fraction ($6.27\%\text{--}21.72\%$), except in \kernel where it reaches $50\%$ (Figure~\ref{fig:break_down}B).

\section{Conclusion}

We presented the VLB-tree framework, an adaptive, cache-friendly encoding for BWT-based compressed suffix arrays. Across datasets, our structures improve query time over state-of-the-art $r$-index-based baselines while remaining competitive in space with the fast subsampled variant. The move data structure remains faster, but at a substantial space cost, highlighting a practical speed--space tradeoff.

Two directions appear particularly important for further practical progress. First, improving locality for $\phi^{-1}$ decoding remains challenging: consecutive values in $\sa[sp_1..ep_1]$ are not necessarily recovered from contiguous areas of the $\phi^{-1}$ representation, which can limit $locate$ performance. Second, scalable construction is still a bottleneck for terabyte-scale deployments. While recent work has improved BWT construction from compressed text~\cite{grlbwt,lg}, comparable techniques for obtaining suffix-array samples at scale remain limited.

From a theoretical point of view, an open challenge is to derive space bounds that reflect the adaptivity of our method. In particular, worst-case guarantees for the VLB-tree of the BWT expressed only in terms of $r$, $\ell$, and $\sigma$ may fail to capture the main advantage of the approach: both $\ell$ and the satellite data adjust to local sequence content. Likewise, although the $Z$ arrays near the top of the tree introduce an apparent $\sigma$-factor overhead, in practice this cost is moderated by our sampling strategy. A probabilistic analysis of the expected block structure would likely yield bounds that better match observed behavior. 

\bibliographystyle{ACM-Reference-Format}
\bibliography{references}
%
%
%%%
%%% If your work has an appendix, this is the place to put it.
%\appendix
%
%\section{Research Methods}
%
%\subsection{Part One}
%
%Lorem ipsum dolor sit amet, consectetur adipiscing elit. Morbi
%malesuada, quam in pulvinar varius, metus nunc fermentum urna, id
%sollicitudin purus odio sit amet enim. Aliquam ullamcorper eu ipsum
%vel mollis. Curabitur quis dictum nisl. Phasellus vel semper risus, et
%lacinia dolor. Integer ultricies commodo sem nec semper.
%
%\subsection{Part Two}
%
%Etiam commodo feugiat nisl pulvinar pellentesque. Etiam auctor sodales
%ligula, non varius nibh pulvinar semper. Suspendisse nec lectus non
%ipsum convallis congue hendrerit vitae sapien. Donec at laoreet
%eros. Vivamus non purus placerat, scelerisque diam eu, cursus
%ante. Etiam aliquam tortor auctor efficitur mattis.
%
%\section{Online Resources}
%
%Nam id fermentum dui. Suspendisse sagittis tortor a nulla mollis, in
%pulvinar ex pretium. Sed interdum orci quis metus euismod, et sagittis
%enim maximus. Vestibulum gravida massa ut felis suscipit
%congue. Quisque mattis elit a risus ultrices commodo venenatis eget
%dui. Etiam sagittis eleifend elementum.
%
%Nam interdum magna at lectus dignissim, ac dignissim lorem
%rhoncus. Maecenas eu arcu ac neque placerat aliquam. Nunc pulvinar
%massa et mattis lacinia.

\end{document}